\documentclass{article}
\usepackage[utf8]{inputenc}
\usepackage{latexsym}
\usepackage[normalem]{ulem}
\usepackage{hyperref}

\usepackage[outline]{contour}
\usepackage{comment}
\usepackage{graphicx}
\usepackage{color}
\usepackage{mathtools}
\usepackage{mathtools,halloweenmath}
\usepackage{amsmath, amsthm, amssymb}
\usepackage{enumerate}
\usepackage{float}
\usepackage{url}
\usepackage{stackrel}
\usepackage{mathrsfs,dsfont}
\usepackage{caption}
\usepackage{subcaption}
\usepackage{wrapfig}
\usepackage{authblk}
\usepackage{bbm}
\usepackage{bm}
\usepackage{epigraph}
\usepackage{tikz-qtree}
\usepackage{tikz}
\usepackage{enumitem}
\usepackage{url}
\usepackage{appendix}
\usepackage{caption}
\usepackage{subcaption}

\parskip=0pt plus 1pt

\setlength{\parindent}{20pt}

\def\NN{\mathcal{N}}

\usepackage{tikz}      
\usetikzlibrary{shapes,automata,positioning,arrows,fit} 
\usetikzlibrary{decorations.pathmorphing}

\newtheorem{theorem}{Theorem}[section]

\newtheorem{proposition}[theorem]{Proposition}
\newtheorem{lemma}[theorem]{Lemma}
\newtheorem{definition}[theorem]{Definition}
\newtheorem{example}[theorem]{Example}

\definecolor{ratecnst}{RGB}{172,172,172}
\newcommand{\been}{\begin{enumerate}}
\newcommand{\enen}{\end{enumerate}}
\newcommand{\beit}{\begin{itemize}}
\newcommand{\enit}{\end{itemize}}
\def\bal#1\eal{\begin{align}#1\end{align}}
\def\bal*#1\eal*{\begin{align}#1\end{align}}

\def\bedf#1\endf{\begin{definition}#1\end{definition}}

\def\R{\mathbb R}

\usepackage{algorithm, algpseudocode}

\makeatletter
\newenvironment{breakablealgorithm}
  {
     \refstepcounter{algorithm}
     \hrule height.8pt depth0pt \kern2pt
     \renewcommand{\caption}[2][\relax]{
       {\raggedright\textbf{\fname@algorithm~\thealgorithm} ##2\par}%
       \ifx\relax##1\relax 
         \addcontentsline{loa}{algorithm}{\protect\numberline{\thealgorithm}##2}%
       \else 
         \addcontentsline{loa}{algorithm}{\protect\numberline{\thealgorithm}##1}%
       \fi
       \kern2pt\hrule\kern2pt
     }
  }{
     \kern2pt\hrule\relax
  }

\def\WR0{\textrm{WR}_0}

\def\supp{\textrm{supp}}

\def\cone{\textrm{Cone}}

\makeatletter
\newcommand{\uset}[3][0ex]{%
  \mathrel{\mathop{#3}\limits^{
    \vbox to#1{\kern+4.5\ex@
    \hbox{#2}\vss}}}}
\makeatother

\begin{document}

\title{Weakly reversible deficiency zero realizations of  reaction networks}
\author[1]{Neal Buxton}
\author[2]{Gheorghe Craciun}
\author[3]{Abhishek Deshpande}
\author[1]{Casian Pantea}
\affil[1]{\small Department of Mathematics, West Virginia University}
\affil[2]{\small Department of Mathematics and Department of Biomolecular Chemistry, \protect \\ University of Wisconsin-Madison}
\affil[3]{Center for Computational Natural Sciences and Bioinformatics, \protect \\
 International Institute of Information Technology Hyderabad}

\date{}

\maketitle

\begin{abstract}
\noindent
We prove that if a given reaction network $\NN$ has a weakly reversible deficiency zero realization for all choice of rate constants, then there exists a {\em unique} weakly reversible deficiency zero network $\NN'$ such that $\NN$ is realizable by $\NN'$.  Additionally, we propose an algorithm to find this weakly reversible deficiency zero network $\NN'$ when it exists.
\end{abstract}

\section{Introduction}
Polynomial dynamical systems are ubiquitous in many types of models in applied mathematics. These are usually complicated high-dimensional systems that depend on many parameters. Their analysis often relies on numerical methods and is sensitive to changes in parameter values. On the other hand, theoretical results that are robust with respect to parameter changes are rare. One prominent example comes from reaction network theory~\cite{craciun2005multiple,craciun2006multiple,wilhelm1995smallest, wilhelm1996mathematical,erdi1989mathematical,deshpande2014autocatalysis,craciun2022homeostasis,craciun2022autocatalytic}, the field of applied mathematics concerned with modeling chemical reaction networks, and particularly focused on the question: 

\medskip

{\em What dynamical properties of a chemical reaction network follow from the structure of the network alone, and are independent of parameter values in the model?}

\medskip

An important result in reaction network theory concerns the class of weakly reversible, deficiency-zero networks ($\WR0$ networks), which stems from thermodynamic considerations~\cite{horn1972general}. It was shown that $\WR0$ networks exhibit stable dynamics regardless of the values of parameters in the model. Specifically, it is known that $\WR0$ networks possess an asymptotically stable equilibrium within every invariant polyhedron ~\cite{horn1972general, horn1972necessary, feinberg1987chemical}. The Global Attractor Conjecture \cite{CraciunDickensteinShiuSturmfels2009, feinberg2019foundations} asserts that in fact this positive equilibrium is {\em globally} stable. 
The Global Attractor Conjecture is a rare theoretical statement about global stability of a large class of polynomial ODE systems,
and is relevant in other fields of applied mathematics. For example, the positive equilibrium in $\WR0$ networks called the Birch point plays an important role in algebraic statistics \cite{CraciunDickensteinShiuSturmfels2009}. Moreover, the Global Attractor Conjecture can also be viewed as a discrete version of the Boltzmann equation for ideal gases~\cite{craciun2021reaction}.

The Global Attractor Conjecture has been the object of much study in recent years, with several important partial results \cite{pantea2012persistence,craciun2013persistence, craciun2011persistence,  anderson2011proof, anderson2020classes, gopalkrishnan2014geometric}. A proof strategy for the general result has been proposed in~\cite{craciun2019polynomial} using the idea of building invariant regions for toric differential inclusions~\cite{ding2021minimal,ding2022minimal,craciun2020endotactic}. 

It is known that different networks may give rise to (i.e., ``realize") the same ODE system. This property of {\em dynamical equivalence} has been explored algorithmically~\cite{craciun2024weakly,craciun2023weaklysingle,craciun2020efficient,craciun2024connectivity,craciun2023lower,craciun2024dimension,kothari2024endotactic} and in the analysis of inclusions between the sets of dynamical systems generated by two different networks  $\NN$ and $\NN'$~\cite{anderson2020classes,kothari2024realizations}. 

In particular, if the network $\NN'$ is $\WR0$, and any dynamical system generated by $\NN$ can also be generated by $\NN'$, then the network $\NN$ enjoys the same stability properties as $\NN'$, without necessarily being a $\WR0$ network itself. In this work we aim to identify these ``disguised" $\WR0$ networks, i.e. networks that generate the same dynamical systems as $\WR0$ networks. Our starting point is a recent algorithm \cite{craciunalgorithm} that identifies such networks for {\em fixed} parameter values in the model. Our first result is an algorithm where the conclusion holds {\em for all} parameter values.   

The paper is structured as follows. Section \ref{sec:prelim} introduces the terminology of reaction networks and a background on convex cones, including relevant results on $\WR0$ networks and network realizability. In section \ref{sec:wr0} we discuss in detail the recent algorithm for $\WR0$ realization for {\em fixed} parameters \cite{craciunalgorithm}. Our new result on $\WR0$ realizations  {\em for all} parameter values and its proof are contained in Section \ref{sec:mainres}. Software implementation is discussed in Section \ref{sec:soft}. In Section \ref{sec:future}, we summarize our findings and list avenues for future work.  

\section{Preliminaries}\label{sec:prelim}
\subsection{A few remarks on polyhedral cones}
We start by recalling standard results on polyhedral cones \cite{ziegler2012lectures,rockafellar1970convex,fulton1993introduction,boyd2004convex}.
A polyhedral cone is a finitely generated convex cone: for $v_1,\ldots, v_k\in {\mathbb R}^n$
$$Cone(v_1,\ldots,v_k)=\{\alpha_1v_1+\ldots+\alpha_kv_k|\alpha_1,\ldots,\alpha_k\ge 0\}.$$
Its (relative) interior is  
$$Cone(v_1,\ldots,v_k)=\{\alpha_1v_1+\ldots+\alpha_kv_k|\alpha_1,\ldots,\alpha_k> 0\}.$$

The faces of $\cal C$ are also polyhedral cones. If $\cal C$ is a polyhedral cone and $S\subset \cal C$ then the {\em face generated by S} is the minimal face of $\cal C$ which includes $S$, i.e. the only face of $\cal C$ that contains $S$ in its interior:
$$\Phi(S)=  \bigcap_{F \text{\,is a face of }{\cal C} \text{ containing }S}F.$$

A polyhedral cone $\cal C$ is {\em pointed} if it does not contain any lines. If $\cal C$ is pointed then a subset $\{c_1,\ldots, c_l\}$ of $\{v_1,\ldots, v_n\}$ is a {\em complete set of extreme rays} of $\cal C$ if ${\cal C}=Cone(c_1,\ldots, c_l)$ and $l$ is minimal with this property. Extreme rays are unique up to multiplication by positive scalars. 

A cone generated by linearly independent vectors $v_1,\ldots, v_k$ is called {\em simplicial}. Simplicial cones $Cone(v_1,\ldots v_k)$ are pointed and $v_1,\ldots, v_k$ form a complete set of extreme rays. Faces of $Cone(v_1,\ldots, v_k)$ are precisely $Cone(S)$ for $S\subseteq\{v_1,\ldots v_k\}$.

\subsection{Reaction networks}\label{sec:reaction_networks}

A {\em reaction} on a finite list of {\em species} $X=(X_1,\ldots, X_n)$ has the form 
$$y_1X_1+\ldots y_nX_n\to y'_1X_1+\ldots y'_nX_n,$$
or in short $y\cdot X\to y'\cdot X$, where the coefficients $y_i$ and $y'_i$ are non-negative integers. A {\em reaction network} is a finite collection of reactions on the same set of species.

With a fixed ordering of species, a reaction network $\NN$ can be identified with a directed graph $(V,E)$ with $V\subset {\mathbb R}^n$. Specifically reaction $y\cdot X\to y'\cdot X$ is represented by the directed edge $y\to y'$.  This graph is called a {\em Euclidean embedded graph}, or {\em E-graph}~\cite{craciun2019quasi,craciun2015toric,craciun2019polynomial,craciun2020endotactic,craciun2020efficient,craciun2024weakly,craciun2021uniqueness,deshpande2023source}. We take this identification as a working definition:

\begin{definition}[Reaction network]
A {\em reaction network} $\NN$ on $n$ species is a finite directed graph $(V,E)$ where $V \subset \mathbb{R}^n$ and every vertex in $V$ is incident to at least one edge. Vertices in $V$ are called {\em complexes} and  edges in $E$ are called {\em reactions}. A reaction $y\to y'$ has {\em source complex} $y$ and {\em product complex} $y'$. The connected components of $\NN=(V,E)$ are called {\em linkage classes}. Further, we say that a complex belongs to a linkage class if it is a vertex of that connected component. 
\end{definition}

By definition all complexes participate in some reaction; we may therefore identify a reaction network $\NN$ with its set of reactions. A {\em rate constant} $\kappa_{y\to y'}>0$ may be assigned to each reaction $y\to y'$. For notation purposes, we fix an ordering of reactions in $\NN$ and collect the rate constants in a vector $\kappa$. 
This yields the {\em mass-action system} $(\NN,\kappa)$ and its {\em mass-action ODE system~\cite{feinberg1979lectures,gunawardena2003chemical,guldberg1864studies,voit2015150,yu2018mathematical,adleman2014mathematics}}.

\begin{equation}\label{eq:ma}
\frac{dx}{dt}=f_{(\NN,\kappa)}(x)=\sum_{y\to y'\in\NN} \kappa_{y\to y'} x^{y}(y'-y)
\end{equation}
\noindent where $x^y$ denotes the monomial $x_1^{y_1}x_2^{y_2}\ldots x_n^{y_n}$.

Equation~\ref{eq:ma} is a polynomial ODE system on the {\em concentration vector} $x(t)=(x_1(t),\ldots, x_n(t))\in {\mathbb R}^n$ of $X=(X_1,\ldots, X_n)$. We  restrict our focus to non-negative solutions of (\ref{eq:ma}), i.e. non-negative initial conditions $x(0)$.

The {\em reaction vector} of reaction $y\to y'$ is defined to be the vector  $y'-y$. Reaction vectors span the {\em stoichiometric subspace} of $\NN$, denoted 
${\cal S}_{\NN}=\text{span}\{y'-y\,|\,y\to y'\in E\}.$ 
By integrating (\ref{eq:ma}) we see that  
$x(t)\in (x(0)+{\cal S}_{\NN})\cap {\mathbb R}_{\ge 0}^n$, in other words solutions $x(t)$ are constrained to translations of the stoichiometric subspace into the non-negative orthant, termed {\em stoichiometric compatibility classes}. 

\medskip

Grouping terms in $f_{(\NN,\kappa)}$, the vector of coefficients of $x^y$ is 
\begin{equation}\label{eq:netrv}
w_y=\sum_{y\to y'\in\NN}\kappa_{y\to y'}(y'-y)
\end{equation} 
referred to as the {\em net reaction vector} of source complex $y$ \cite{craciun2021uniqueness}. The net reaction vector of complex describes the relative interior (see below) of $Cone_{\NN}(y)$ when $\kappa$ is varied ($Cone_{\NN}(y)$ is defined in the same way but with $\kappa\ge 0$):
\begin{equation}\label{eq:varyk}
\left\{\sum_{y\to y'\in\NN}\kappa_{y\to y'}(y'-y)|\kappa>0\right\}=\text{int }(Cone_{\NN}(y))  
\end{equation}

Letting $y_1,\ldots, y_m$ denote the source complexes of $\NN$ with non-zero net reaction vectors $w_{y_1},\ldots,w_{y_m}$ we may write

\begin{equation}\label{eq:netst}
f_{({\NN},\kappa)}(x)=\sum_{j=1}^m w_{y_j}x^{y_j}=Wx^Y.
\end{equation}
The columns of $W$ are net reaction vectors  
$W=[w_{y_1}|\ldots|w_{y_m}]$ and the columns of $Y$ are the corresponding source complexes of $\NN$,
$Y=[y_1|\ldots |y_m].$ By notation  $x^{Y}=[x^{y_1},\ldots, x^{y_m}]^T.$ 

All columns of $Y$ are pairwise distinct, and all columns of $W$ are non-zero. Any polynomial ODE system on ${\mathbb R}^n$ is written uniquely as $\frac{dx}{dt}=Wx^Y$ for some matrices $W, Y\in{\mathbb R}^{n\times m}$
with these properties (here $m$ is the number of monomials in the ODE system). However, not all polynomial ODE systems are mass-action systems, i.e. there may not exist mass-action systems $(\NN,\kappa)$ for which 
$Wx^Y=f_{(\NN,\kappa)}$. Polynomials appearing on the right hand side of mass-action ODE systems are characterized by the Hungarian Lemma \cite{HungarianLemma}. 

\medskip

\subsection{Weakly reversible deficiency zero networks}\label{sec:wr0}

A reaction network must be {\em weakly reversible} to admit a thermodynamical equilibrium (so-called ``complex balanced" equilibrium) \cite{horn1972general}:

\begin{definition}[Weakly reversible networks]  
A reaction network $\NN$ is called {\em weakly reversible} if all its connected components are strongly connected (i.e. each reaction is a part of a directed cycle).
\end{definition}

The {\em deficiency} of a network is a non-negative integer computed from the structure of the network itself. It is a measure of how much the linear structure of the network is coupled with the nonlinear exponents in the ODE system: the smaller the deficiency, the more decoupling.

\begin{definition}[Deficiency]
The {\em deficiency} of reaction network $\NN$ is 
$$\delta=|V|- l- \dim({\cal S}_{\NN})$$
where $l$ is the number of linkage classes of $\NN$ and $|V|$ denotes the number of complexes of $\NN$. 
\end{definition}

Deficiency proved a useful tool in answering questions on 
reaction networks, including multistationarity \cite{conradi2007subnetwork}. It turns out that deficiency zero, weakly reversible networks (termed {\em $\WR0$ networks} here) have exceptionally stable dynamics: 

\begin{theorem}[Deficiency Zero Theorem \cite{feinberg1979lectures}] If $\NN$ is a $\WR0$ network, then for any rate constants $\kappa$ the mass-action system $(\NN,\kappa)$ has a unique positive equilibrium point in each compatibility class, and this equilibrium is asymptotically stable. 
\end{theorem}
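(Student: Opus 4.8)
The plan is to follow the classical Horn--Jackson--Feinberg route, splitting the statement into an \emph{existence} half and a \emph{uniqueness and stability} half.

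First I would write the mass-action ODE in the factored form $\dot x = Y A_\kappa \Psi(x)$, where $\Psi(x)$ is the vector of monomials $x^y$ indexed by the complexes of $\NN$, the matrix $Y$ has these complexes as columns, and $A_\kappa$ is the kinetic (graph-Laplacian-type) matrix of the reaction digraph, so that $(A_\kappa v)_y = \sum_{y'\to y}\kappa_{y'\to y} v_{y'} - v_y\sum_{y\to y''}\kappa_{y\to y''}$. Because $\NN$ is weakly reversible, $A_\kappa$ is block diagonal over the linkage classes and each block is the Laplacian of a strongly connected digraph; by the Matrix--Tree theorem each block has a one-dimensional kernel spanned by a \emph{strictly positive} vector, so $\ker A_\kappa$ has dimension $l$ and is spanned by vectors $\rho^{(1)},\dots,\rho^{(l)}$ with $\rho^{(i)}$ positive on the complexes of linkage class $i$ and zero elsewhere. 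A positive point $x^\ast$ is a complex-balanced equilibrium exactly when $\Psi(x^\ast)\in\ker A_\kappa$, which upon taking logarithms is the requirement that on each linkage class $i$ the function $y\mapsto y\cdot\ln x^\ast$ equals $\ln\rho^{(i)}_y$ up to an additive constant $\beta_i$. Existence of a solution $\mu=\ln x^\ast$ of $y\cdot\mu+\beta_i=\ln\rho^{(i)}_y$ is a linear-algebra question about the map $\Lambda\colon\mathbb R^n\times\mathbb R^l\to\mathbb R^{|V|}$, $(\mu,\beta)\mapsto(y\cdot\mu+\beta_{i(y)})_y$: its kernel consists of $\mu\in\mathcal S_\NN^{\perp}$ with $\beta$ then forced, so $\dim\operatorname{im}\Lambda = n+l-(n-\dim\mathcal S_\NN)=l+\dim\mathcal S_\NN$, and the codimension of $\operatorname{im}\Lambda$ in $\mathbb R^{|V|}$ is exactly $|V|-l-\dim\mathcal S_\NN=\delta$. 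Hence $\delta=0$ makes $\Lambda$ surjective, so a solution — and therefore a positive complex-balanced equilibrium $x^\ast$ — exists for \emph{every} $\kappa$.

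Second, fixing such an $x^\ast$, I would use the pseudo-Helmholtz free energy $L(x)=\sum_{i=1}^n\bigl(x_i\ln x_i-x_i\ln x_i^\ast-x_i+x_i^\ast\bigr)$ on the positive orthant. Its Hessian is $\operatorname{diag}(1/x_i)$, so $L$ is strictly convex on each stoichiometric compatibility class, and $\nabla L(x^\ast)=0$, so $x^\ast$ is the unique minimizer of $L$ on the compatibility class through $x^\ast$. Differentiating along trajectories and using $\nabla L=\ln x-\ln x^\ast$ together with $Y^{\top}(\ln x-\ln x^\ast)=\ln\Psi(x)-\ln\Psi(x^\ast)$, one gets $\dot L=\sum_y(\ln v_y)(A_\kappa\Psi(x))_y$ with $v_y:=x^y/(x^\ast)^y$; the Horn--Jackson rearrangement, which uses the complex-balancing identity $A_\kappa\Psi(x^\ast)=0$ to reorganize this as an edge sum, yields $\dot L=\sum_{y\to y'}\kappa_{y\to y'}(x^\ast)^y\,v_y\ln\frac{v_{y'}}{v_y}$. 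Applying $\ln t\le t-1$ edgewise and using complex balance once more gives $\dot L\le\sum_{y\to y'}\kappa_{y\to y'}(x^\ast)^y(v_{y'}-v_y)=0$, with equality iff $v_{y'}=v_y$ along every reaction; weak reversibility then forces $v$ to be constant on each linkage class, i.e. $\Psi(x)\in\ker A_\kappa$, so $x$ is complex-balanced. Thus $L$ is a strict Lyapunov function on each compatibility class, which gives asymptotic stability of $x^\ast$ relative to its compatibility class. For uniqueness, any other positive equilibrium $\bar x$ in that class has $\dot L(\bar x)=0$, hence is complex-balanced; then $\Psi(\bar x),\Psi(x^\ast)\in\ker A_\kappa$ forces $\ln\bar x-\ln x^\ast\in\mathcal S_\NN^{\perp}$, while $\bar x-x^\ast\in\mathcal S_\NN$, so $0=(\ln\bar x-\ln x^\ast)\cdot(\bar x-x^\ast)=\sum_i(\ln\bar x_i-\ln x_i^\ast)(\bar x_i-x_i^\ast)\ge0$ with equality iff $\bar x=x^\ast$.

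The main obstacle I expect is the recognition in the first step that ``$\delta=0$'' is precisely the condition allowing the $\kappa$-dependent kernel direction of $A_\kappa$ to be realized as the monomial vector $\Psi(x^\ast)$ of a positive point \emph{simultaneously for all} $\kappa$; this is also where weak reversibility (positivity of $\rho^{(i)}$ on each linkage class) is essential. On the technical side, the delicate computation is the Horn--Jackson reorganization of $\dot L$ into a manifestly nonpositive edge sum and the analysis of its equality case, together with upgrading ``$\dot L\le0$'' to genuine asymptotic stability while staying in the interior of the compatibility class.
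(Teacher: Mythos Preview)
The paper does not give its own proof of this theorem: it is stated as background and attributed to Feinberg's lectures \cite{feinberg1979lectures} (with the surrounding Horn--Jackson references), so there is nothing in the paper to compare your argument against. Your proposal is a faithful and correct outline of the classical Horn--Jackson--Feinberg proof: the factorization $\dot x = Y A_\kappa \Psi(x)$, the Matrix--Tree positivity of $\ker A_\kappa$ on each linkage class under weak reversibility, the rank count showing that $\delta=0$ is exactly surjectivity of the linear map $(\mu,\beta)\mapsto (y\cdot\mu+\beta_{i(y)})_y$ (hence existence of a complex-balanced equilibrium for every $\kappa$), and the pseudo-Helmholtz Lyapunov function for uniqueness and asymptotic stability. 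The edge-sum reorganization of $\dot L$ and the equality-case analysis via weak reversibility are stated correctly, as is the final orthogonality argument $(\ln\bar x-\ln x^\ast)\cdot(\bar x-x^\ast)=0\Rightarrow \bar x=x^\ast$.

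One small remark: your passage from ``$\dot L\le 0$ with equality only at complex-balanced points'' to ``asymptotic stability'' is exactly the step you flag as delicate, and indeed the classical references handle it by combining strict convexity of $L$ on the compatibility class (so sublevel sets are compact in the relative interior) with a LaSalle-type argument; you would want to make that explicit in a full write-up, since Lyapunov strict decrease alone does not automatically exclude boundary $\omega$-limit points. But as a proof plan matching what the cited literature does, this is accurate.
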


In fact, it has long been conjectured that  equilibria of $\WR0$ systems are {\em globally} stable with respect to their compatibility classes. This is the so-called Global Attractor Conjecture \cite{CraciunDickensteinShiuSturmfels2009}, the focus of much work in recent years. The conjecture has been shown to be true when the stoichiometric subspace has dimension less than 3 \cite{CraciunDickensteinShiuSturmfels2009, pantea2012persistence, craciun2011persistence}. A proof of the general result was proposed recently in \cite{craciun2019polynomial}. 

\subsection{Realizations of mass-action systems}\label{sec:realizations}

It is possible for different mass-action systems to generate the same set of differential equations, as the next example shows.

\begin{example}\label{ex:1} Consider networks $\NN$ and $\NN_0$ in Figure \ref{fig:ex1}. 
Choosing all rate constants equal to 3 for $\NN$ and all rates equal to $1$ for ${\NN_0}$, the mass-action systems $(\NN,{\bf 3})$ and $(\NN_0,{\bf 1})$  generate the same ODE system

\begin{equation}\label{eq:ex1ode}
\begin{bmatrix}
\dot x\\
\dot y
\end{bmatrix}
=\begin{bmatrix}
1&0&-1\\
1&-1&0
\end{bmatrix}
\begin{bmatrix}
1\\
xy^2\\
x^2y
\end{bmatrix}
=Wx^Y.
\end{equation}

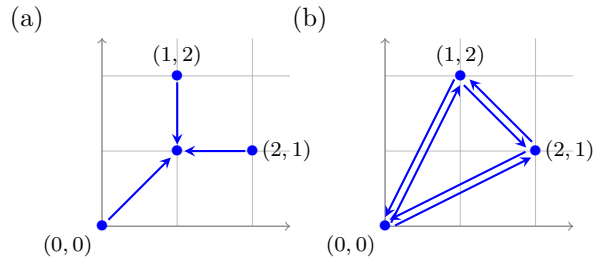
\begin{figure}[h!tbp]\label{fig:ex1}
	\centering
	\begin{subfigure}[b]{0.3\textwidth}
	\centering
		\begin{tikzpicture}
        \draw [step=1, gray!50!white, thin] (0,0) grid (2.5,2.5);
        \node at (-1,2.75) {(a)};
		\node at (0,-0.25) {};
            \draw [->, gray] (0,0)--(2.5,0);
            \draw [->, gray] (0,0)--(0,2.5);
            \node [inner sep=0pt, blue] (2x1y) at (2,1) {$\bullet$};
            \node [inner sep=0pt, blue] (1x2y) at (1,2) {$\bullet$};
            \node [inner sep=0pt, blue] (0) at (0,0) {$\bullet$};
            \node [inner sep=0pt, blue] (xy) at (1,1) {$\bullet$};
            \draw [-{stealth}, thick, blue] (0)--(xy) node [midway, left]{};
            \draw [-{stealth}, thick, blue] (2x1y)--(xy) node [midway, above]{}; 
            \draw [-{stealth}, thick, blue] (1x2y)--(xy) node [midway, left]{}; 
            \node at (0) [below left] {\footnotesize $(0,0)$};
            \node at (1x2y) [above] {\footnotesize $(1,2)$};
            \node at (2x1y) [right] {\footnotesize $(2,1)$};
		\end{tikzpicture}
    \end{subfigure}
    \begin{subfigure}[b]{0.3\textwidth}
	\centering
		\begin{tikzpicture}
        \draw [step=1, gray!50!white, thin] (0,0) grid (2.5,2.5);
        \node at (-1,2.75) {(b)};
		\node at (0,-0.25) {};
            \draw [->, gray] (0,0)--(2.5,0);
            \draw [->, gray] (0,0)--(0,2.5);
            \node [inner sep=0pt, blue] (2x1y) at (2,1) {$\bullet$};
            \node [inner sep=0pt, blue] (1x2y) at (1,2) {$\bullet$};
            \node [inner sep=0pt, blue] (0) at (0,0) {$\bullet$};
            \draw [-{stealth}, thick, blue, transform canvas={xshift=0.25ex, yshift=-0.25ex}] (0)--(1x2y) node [midway, left]{}; 
            \draw [-{stealth}, thick, blue, transform canvas={xshift=-0.25ex, yshift=0.25ex}] (1x2y)--(0) node [midway, left]{}; 
            \draw [-{stealth}, thick, blue, transform canvas={xshift=-0.25ex, yshift=0.25ex}] (2x1y)--(0) node [midway, above]{}; 
            \draw [-{stealth}, thick, blue, transform canvas={xshift=0.25ex, yshift=-0.25ex}] (0)--(2x1y) node [midway, above]{}; 
            \draw [-{stealth}, thick, blue, transform canvas={xshift=-0.25ex, yshift=-0.25ex}] (1x2y)--(2x1y) node [midway, left] {};
            \draw [-{stealth}, thick, blue, transform canvas={xshift=0.25ex, yshift=0.25ex}] (2x1y)--(1x2y) node [midway, left] {}; 
            \node at (0) [below left] {\footnotesize $(0,0)$};
            \node at (1x2y) [above] {\footnotesize $(1,2)$};
            \node at (2x1y) [right] {\footnotesize $(2,1)$};
		\end{tikzpicture}
    \end{subfigure}
\caption{Network $\NN$ in (a) and $\NN_0$ in (b) give rise to the same ODE system if we choose all rate constants equal to 3 for $\NN$ and all rates equal to $1$ for ${\NN_0}$.}\label{fig:ex1}
\end{figure}
\end{example}

Mass-action systems $(\NN,{\bf 3})$ and $(\NN_0,{\bf 1})$  are called {\em realizations of mass-action system (\ref{eq:ex1ode})}.  
\begin{definition}[Realizations of mass-action systems]
The mass-action system $(\NN',\kappa')$ is called a {\em realization} of the mass-action ODE system $\frac{dx}{dt}=Wx^Y$ if $f_{(\NN',\kappa')}=Wx^Y.$ The mass-action system $(\NN',\kappa')$ is called a realization of $(\NN,\kappa)$ if it is a realization of $\dot x =f_{(\NN,\kappa)}(x)$.

\end{definition}

We may replace the rate constants $\kappa={\bf 1}$ in example \ref{ex:1} by any positive numbers, and the conclusion remains: there exist rate constants $\kappa'$ such that $(\NN',\kappa')$ is a realization of $\dot x =f_{(\NN,\kappa)}(x)$. In other words all possible mass-action dynamics generated by $\NN$ can also be seen as mass-action dynamics generated by $\NN'$:

\begin{definition}[Dynamically realizable networks]\label{def:real}
Let $\NN$, $\NN'$ be two reaction networks on the same list of species. $\NN$ is called {\em  realizable} by  $\NN'$ if for any choice of rate constants $\kappa$ 
for $\NN$ there exist rate constants $\kappa'$ for $\NN'$ such that 
$f_\NN(\kappa,x)=f_{\NN'}(\kappa',x)$  for all $x\in\mathbb{R}^n_{>0}$. We call $\NN'$ a {\em realization} of $\NN$. 
\end{definition}

A network $\NN$ that is realizable by $\NN'$ can be also rephrased as the dynamics of $\NN'$ contains the dynamics of $\NN$ in the language of~\cite{anderson2020classes}. Deshpande et.al~\cite{kothari2024realizations} refer to it as $\NN \sqsubseteq \NN'$.

In other words $\NN'$ is a realization of $\NN$ if for any rate constants $\kappa$ of $\NN$ there exist rate constants $\kappa'$ for $\NN'$ such that the mass-action system   $(\NN',\kappa')$ is a realization of the mass-action ODE system $\dot x=f_{(\NN,\kappa)}(x)$. 

Definition~\ref{def:real} is closely related to the notions of dynamical equivalence. Dynamical equivalence has also been called \emph{coufoundability} by  Craciun and Pantea~\cite{craciun2008identifiability} and \emph{macroequivalence} by Horn and Jackson~\cite{horn1972general}. Dynamical realizability has the following simple characterization  
\cite{craciun2008identifiability, anderson2020classes}

\begin{theorem}\label{thm:dynIncl}
Let $\NN$, $\NN'$ be two reaction networks on the same list of species. Then $\NN$ is {\em dynamically realizable} by  $\NN'$ if and only if for any source complex $y$ of $\NN$, $y$ is also a source complex of $\NN'$ and $Cone_{\NN}(y)\subseteq Cone_{\NN'}(y)$.
\end{theorem}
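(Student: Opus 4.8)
The plan is to reduce the functional identity $f_\NN(\kappa,\cdot)=f_{\NN'}(\kappa',\cdot)$ to one linear condition per complex, and then to read off the cone statements from \eqref{eq:varyk}. First I would invoke the linear independence of the monomial functions $x\mapsto x^y$ (for distinct $y\in\mathbb Z_{\ge 0}^n$) on the open set $\mathbb R_{>0}^n$: writing both vector fields in the form \eqref{eq:netst}, the identity $f_\NN(\kappa,x)=f_{\NN'}(\kappa',x)$ on $\mathbb R_{>0}^n$ is equivalent to matching the coefficient of each monomial $x^y$, i.e. to $w_y^\NN(\kappa)=w_y^{\NN'}(\kappa')$ for every complex $y$, where $w_y^\NN(\kappa):=\sum_{y\to y'\in\NN}\kappa_{y\to y'}(y'-y)$, with the convention $w_y^\NN(\kappa):=0$ when $y$ is not a source complex of $\NN$ (and likewise for $\NN'$). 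Thus $\NN$ is realizable by $\NN'$ exactly when, for every $\kappa>0$, there is a $\kappa'>0$ matching all of these net reaction vectors.

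For the direction ($\Rightarrow$), fix a source complex $y$ of $\NN$. The reaction vectors $y'-y$ with $y\to y'\in\NN$ are nonzero, so $\{\kappa : w_y^\NN(\kappa)=0\}$ is a proper linear subspace; choosing $\kappa>0$ outside it makes $w_y^\NN(\kappa)\neq 0$, and then a realizing $\kappa'$ has $w_y^{\NN'}(\kappa')\neq 0$, which forces $y$ to be a source complex of $\NN'$. Letting $\kappa$ range over all positive vectors and applying \eqref{eq:varyk} to both networks gives $\mathrm{int}(Cone_\NN(y))=\{w_y^\NN(\kappa):\kappa>0\}\subseteq\{w_y^{\NN'}(\kappa'):\kappa'>0\}=\mathrm{int}(Cone_{\NN'}(y))$, and taking closures (a polyhedral cone equals the closure of its relative interior) yields $Cone_\NN(y)\subseteq Cone_{\NN'}(y)$.

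For the direction ($\Leftarrow$), assume every source complex $y$ of $\NN$ is also a source complex of $\NN'$ with $Cone_\NN(y)\subseteq Cone_{\NN'}(y)$, and fix $\kappa>0$ for $\NN$. For each such $y$, \eqref{eq:varyk} places $w_y^\NN(\kappa)$ in $Cone_\NN(y)\subseteq Cone_{\NN'}(y)$, so it is a non-negative combination $\sum_{y\to y'\in\NN'}\kappa'_{y\to y'}(y'-y)$ of the reaction vectors of $\NN'$ out of $y$; setting all remaining rates of $\NN'$ to zero assembles these into a choice $\kappa'\ge 0$ with $w_y^{\NN'}(\kappa')=w_y^\NN(\kappa)$ for every complex $y$, whence $f_{\NN'}(\kappa',\cdot)=f_\NN(\kappa,\cdot)$ by the first step.

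The hard part lives entirely in ($\Leftarrow$): \eqref{eq:varyk} describes only the net reaction vectors attainable with \emph{strictly positive} rates, namely the relative interior of $Cone_{\NN'}(y)$, whereas the hypothesis only locates $w_y^\NN(\kappa)$ in $Cone_{\NN'}(y)$ itself; moreover a source complex of $\NN'$ that is not a source complex of $\NN$ would require a \emph{vanishing} net reaction vector, i.e. $0\in\mathrm{int}(Cone_{\NN'}(y))$. I expect this is reconciled by the standing convention that a realization by $\NN'$ may use a subnetwork (so zero entries of $\kappa'$ are permitted), or else via a density/perturbation argument on $\kappa'$. Making this bookkeeping precise — rather than the main idea, which is just monomial independence together with \eqref{eq:varyk} — is where the real effort goes.
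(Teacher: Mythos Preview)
The paper does not supply its own proof of this theorem; it is quoted from \cite{craciun2008identifiability,anderson2020classes} as a known characterization. Your argument is exactly the standard one found there: reduce $f_{\NN}(\kappa,\cdot)=f_{\NN'}(\kappa',\cdot)$ to per-complex equalities of net reaction vectors via linear independence of the monomials $x^y$, and then invoke \eqref{eq:varyk} to translate between net reaction vectors and (relative interiors of) the cones $Cone_\NN(y)$. Your $(\Rightarrow)$ direction is complete as written.

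The issue you flag in $(\Leftarrow)$ is real and you have diagnosed it correctly. With the paper's convention that rate constants are strictly positive (Section~\ref{sec:reaction_networks}), realizability by $\NN'$ requires $w_y^{\NN}(\kappa)\in\mathrm{int}\,Cone_{\NN'}(y)$ for every source complex $y$ of $\NN$, and $0\in\mathrm{int}\,Cone_{\NN'}(z)$ for every source complex $z$ of $\NN'$ that is not a source of $\NN$; the bare cone inclusion in the statement does not guarantee either. In the cited references the resolution is precisely your first guess: the definition of ``$\NN$ realizable by $\NN'$'' is taken to allow $\kappa'\ge 0$, i.e.\ one may realize using a subnetwork of $\NN'$. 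Under that reading your $(\Leftarrow)$ argument is already complete, and the ``hard part'' dissolves into the convention. It would be worth stating this convention explicitly when you write it up, since the paper's Definition~\ref{def:real} on its face asks for $\kappa'>0$.
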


\section{$\WR0$ realizations of mass-action systems}\label{se:wr0real}

 Mass-action ODE systems may have multiple realizations, for some of which results from  reaction network theory may apply. In this paper we are interested in realizations $(\NN_0,\kappa_0)$ where $\NN_0$ is a $\WR0$ network. If such realizations exist, then the results in Section \ref{sec:wr0} allow powerful conclusions to be drawn on the dynamics of $(\NN,\kappa)$ without $\NN$ necessarily being a $\WR0$ network. This motivates the following definition.

\begin{definition}[$\WR0$ realization of mass-action systems] We say that the mass-action ODE system $\frac{dx}{dt}=Wx^Y$ on ${\mathbb R^n}$  {\em has $\WR0$ realization} if there exists a realization $(\NN_0,\kappa_0)$ of $(\NN,\kappa)$ where $\NN_0$ is a $\WR0$ network. We say that the mass-action system $(\NN,\kappa)$ has $\WR0$ realization if its corresponding ODE system does. 
\end{definition} 

\begin{example}\label{ex:2}

Consider a network $\NN$ in Figure \ref{fig:ex2}(a). This is not a weakly reversible network, but one may ask the question: After fixing rate constants $\kappa$, is there a $\WR0$ system that has the same ODE system?

\begin{figure}[h!tbp]
	\centering
    \begin{subfigure}[b]{0.3\textwidth}
	   \centering
		\begin{tikzpicture}[scale=2]
        \draw [step=1, gray!50!white, thin] (0,0) grid (1.5,1.5);
        \node at (-1,1.5) {};
        \node at (-0.5,1.5) {(a)};
            \draw [->, gray] (0,0)--(1.5,0);
            \draw [->, gray] (0,0)--(0,1.5);
            \node [inner sep=0pt, blue] (0) at (0,0) {$\bullet$};
            \node [inner sep=0pt, blue] (0x1y) at (0,1) {$\bullet$};
            \node [inner sep=0pt, blue] (1x0y) at (1,0) {$\bullet$};
            \node [inner sep=0pt, blue] (1x1y) at (1,1) {$\bullet$};
            \draw [-{stealth}, thick, blue] (0)--(1x1y) node [midway, left]{$\kappa_1$};
            \draw [-{stealth}, thick, blue] (1x1y)--(1x0y) node [midway, right] {$\kappa_2$};
            \draw [-{stealth}, thick, blue] (1x1y)--(0x1y) node [midway, above] {$\kappa_3$};
            \node at (0) [below left] {\footnotesize $(0,0)$};
            \node at (1x1y) [above right] {\footnotesize $(1,1)$};
            \node at (0x1y) [left] {\footnotesize $(0,1)$};
            \node at (1x0y) [below] {\footnotesize $(1,0)$};
		\end{tikzpicture}
    \end{subfigure}
    \begin{subfigure}[b]{0.3\textwidth}
	   \centering
		\begin{tikzpicture}[scale=2]
        \draw [step=1, gray!50!white, thin] (0,0) grid (1.5,1.5);
        \node at (-1,1.5) {};
        \node at (-0.5,1.5) {(b)};
            \draw [->, gray] (0,0)--(1.5,0);
            \draw [->, gray] (0,0)--(0,1.5);
            \node [inner sep=0pt, blue] (0) at (0,0) {$\bullet$};
            \node [inner sep=0pt, blue] (1x1y) at (1,1) {$\bullet$};
            \draw [-{stealth}, thick, blue, transform canvas={xshift=0.25ex, yshift=-0.25ex}] (0)--(1x1y) node [midway, right]{$k_1$}; 
            \draw [-{stealth}, thick, blue, transform canvas={xshift=-0.25ex, yshift=0.25ex}] (1x1y)--(0) node [midway, left]{$k_2$}; 
            \node at (0) [below left] {\footnotesize $(0,0)$};
            \node at (1x1y) [above right] {\footnotesize $(1,1)$};
		\end{tikzpicture}
    \end{subfigure}
    \caption{Mass-action system $\NN$ in (a) has $\WR0$ realization (b) when $\kappa_2=\kappa_3$}\label{fig:ex2}
\end{figure}
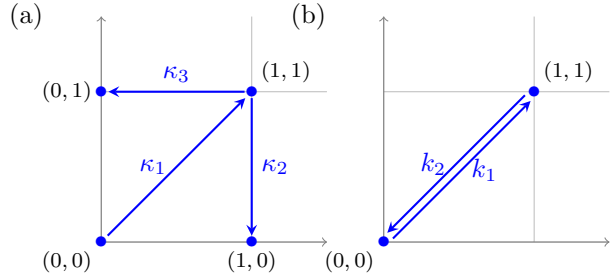

We see that if (and in fact only if) $\kappa_2=\kappa_3$, then such a realization exists: the network $\NN_0$ in Figure \ref{fig:ex2}(b) with rate constants 
$(k_1,k_2)=(\kappa_1, \kappa_2+\kappa_3)$
generate the same ODE system 

\begin{equation}\label{eq:ex2}
\begin{bmatrix}
\dot x\\
\dot y
\end{bmatrix}=
\begin{bmatrix}\kappa_1&-\kappa_3\\
\kappa_1&-\kappa_2\end{bmatrix}
\begin{bmatrix}1\\xy
\end{bmatrix}
\end{equation}


\begin{example}\label{ex:3}
Consider network $\NN$ in Figure \ref{fig:ex3}(a). Again, this is not a weakly reversible network.    

\begin{figure}[h!tbp]
	\centering
    \begin{subfigure}[b]{0.3\textwidth}
    \centering
		\begin{tikzpicture}[scale=2]
        \draw [step=1, gray!50!white, thin] (0,0) grid (1.5,1.5);
        \node at (-1,1.5) {};
        \node at (-0.5,1.5) {(a)};
            \draw [->, gray] (0,0)--(1.5,0);
            \draw [->, gray] (0,0)--(0,1.5);
            \node [inner sep=0pt, blue] (0) at (0,0) {$\bullet$};
            \node [inner sep=0pt, blue] (1x0y) at (1,0) {$\bullet$};
            \node [inner sep=0pt, blue] (1x1y) at (1,1) {$\bullet$};
            \node [inner sep=0pt, blue] (0x1y) at (0,1) {$\bullet$};
            \draw [-{stealth}, thick, blue] (0)--(1x0y) node [midway, below] {$\kappa_1$};
            \draw [-{stealth}, thick, blue] (1x0y)--(1x1y) node [midway, right] {$\kappa_2$}; 
            \draw [-{stealth}, thick, blue] (1x1y)--(0) node [midway, left] {$\kappa_3$}; 
            \draw [-{stealth}, thick, blue] (0)--(0x1y) node [midway, left] {$\kappa_4$};
            \node at (0) [below left] {\footnotesize $(0,0)$};
            \node at (0x1y) [above] {\footnotesize $(0,1)$};
            \node at (1x1y) [above right] {\footnotesize $(1,1)$};
            \node at (1x0y) [below] {\footnotesize $(1,0)$};
		\end{tikzpicture}
    \end{subfigure}
    \begin{subfigure}[b]{0.3\textwidth}
    \centering
		\begin{tikzpicture}[scale=2]
        \draw [step=1, gray!50!white, thin] (0,0) grid (1.5,1.5);
        \node at (-1,1.5) {};
        \node at (-0.5,1.5) {(b)};
            \draw [->, gray] (0,0)--(1.5,0);
            \draw [->, gray] (0,0)--(0,1.5);
            \node [inner sep=0pt, blue] (0) at (0,0) {$\bullet$};
            \node [inner sep=0pt, blue] (1x0y) at (1,0) {$\bullet$};
            \node [inner sep=0pt, blue] (1x1y) at (1,1) {$\bullet$};
            \draw [-{stealth}, thick, blue] (0)--(1x0y) node [midway, below]{$k_1$};
            \draw [-{stealth}, thick, blue] (1x0y)--(1x1y) node [midway, right]{$k_2$}; 
            \draw [-{stealth}, thick, blue, transform canvas={xshift=-0.25ex, yshift=0.25ex}] (1x1y)--(0) node [midway, left] {$k_3$}; 
            \draw [-{stealth}, thick, blue, transform canvas={xshift=0.25ex, yshift=-0.25ex}] (0)--(1x1y) node [midway, right] {$k_4$};
            \node at (0) [below left] {\footnotesize $(0,0)$};
            \node at (1x1y) [above right] {\footnotesize $(1,1)$};
            \node at (1x0y) [below] {\footnotesize $(1,0)$};
		\end{tikzpicture}
    \end{subfigure}
     \caption{Mass-action system $\NN$ in (a) has $\WR0$ realization (b) when $\kappa_1>\kappa_4$}\label{fig:ex3}
\end{figure}

However, when $k_1>k_4$, its corresponding mass-action ODE system
\begin{equation}\label{eq:ex3}
\begin{bmatrix}\dot x\\\dot y\end{bmatrix}
=\begin{bmatrix}k_1&0&-k_3\\k_4&k2&-k_3\end{bmatrix}\begin{bmatrix}1\\x\\xy\end{bmatrix}
\end{equation}

is the same as that of the $\WR0$ mass-action system in Figure \ref{fig:ex3}(b) with $(k_1,k_2,k_3,k_4)=(\kappa_1-\kappa_4, \kappa_2,\kappa_3,\kappa_4)$.
\end{example}

\end{example}


Examples~\ref{ex:1} and \ref{ex:2} illustrate the usefulness of $\WR0$ realizations: based on their $\WR0$ realization, the ODE systems (\ref{eq:ex2}) and (\ref{eq:ex3}) have a unique positive steady state, which is globally stable with respect to ${\mathbb R}_{>0}^n$~\cite{craciun2011persistence,craciun2013persistence,craciun2015toric,gopalkrishnan2014geometric,anderson2011proof} whenever the corresponding conditions on the rate constants $\kappa$ are satisfied. These conclusions may be difficult to reach by inspection of the ODE system or network $\NN$ alone. All of these motivate the following: 

\noindent {\bf  Question 1} Does a given mass-action system $(\NN,\kappa)$ have $\WR0$ realization? \\

\noindent This question was answered recently in \cite{WR0alg}. The answer is the following algorithm, which is both the motivation and the main technical tool for our paper. We use the following notation for the support of a vector $v\in{\mathbb R}^n$: 
$supp(v)=\{i|1\le i\le n \text{ and } v_i\neq 0$\}.\\
\medskip

\begin{breakablealgorithm}\label{alg:polly}
\caption{($\WR0$ realizations of mass-action systems)}

\medskip

\noindent \textbf{Input:} 
Matrices $Y=(y_1,\ldots,y_m)$ and $W=(w_1,\ldots,w_m)$ that define a mass-action system written using net reaction vectors $\frac{dx}{dt}=\sum_{i=1}^mx^{y_i}w_i$. 

\medskip

\noindent \textbf{Output:} Either ``Success", and return the unique $\WR0$ realization $(\NN,\kappa)$ or ``Fail", and  print that such a realization does not exist.

\medskip

\begin{algorithmic}[1] 

\State initialize $(\NN_0,\kappa_0)= \emptyset$

\State find a minimal set of extreme rays $\{c_1,c_2,\ldots,c_l\}$ of the pointed convex cone $\ker(W)\cap\R_{\geq 0}^m$.

\If{$\supp(c_1),\supp(c_2),\ldots,\supp(c_l)$ do not form a partition of $\{1,2,\ldots,m\}$} 

\State print: Fail. $\WR0$ realization does not exist.  {\bf exit}.

\Else

define $V_p:=\supp(c_p)$ for $p=1,2,\ldots,l$.

\For{$p=1,2,\ldots,l$} 

\If{the vectors $\{y_i:i\in V_p\}$ are not affinely independent}

\State print: Fail. $\WR0$ realization does not exist.  {\bf exit}.

\Else

\For{$i\in V_p$}

\If{$w_i\not\in\cone\{y_j-y_i:j\in V_p\}$}

\State print: Fail. $\WR0$ realization does not exist.  {\bf exit}.

\Else

\State decompose $w_i = \displaystyle\sum_{j\in V_p,j\ne i}k_{ij}(y_j-y_i)$ with $k_{ij}\ge 0$.

\For {$j\in V_p, j\neq i$}

\If{$k_{ij}>0$}

\State add reaction $y_i\to y_j$ with rate constant $k_{ij}$ to mass-action system $(\NN_0,\kappa_0)$

\EndIf

\EndFor

\EndIf

\EndFor

\EndIf

\EndFor

\EndIf

\State print: Success. $\WR0$ realization exists. Return $(\NN_0,\kappa_0)$.

\end{algorithmic}

\end{breakablealgorithm}

\bigskip

\medskip

Algorithm \ref{alg:polly} combines in a streamlined procedure a few key technical facts on $\WR0$ mass-action systems, including the following, which explains the {\bf exit} lines in Algorithm
\ref{alg:polly}.

If {\bf exit} lines are avoided then Algorithm \ref{alg:polly} terminates by returning the $\WR0$ realization $(\NN_0,\kappa_0)$.

%



We end this section by recalling the following propositions from previous works.

\begin{proposition}\label{prop:2}\cite[Lemma 3.1]{craciunalgorithm}
Let $(\NN,\kappa)$ be a $\WR0$ mass-action system with corresponding mass-action ODE system $\dot x=Wx^Y,$ $W\in{\mathbb R}^{n\times m}$. Then $\ker W$ has a basis of non-negative vectors $c_1,\ldots, c_l$ whose supports partition $\{1,\ldots,m\}$ (each $c_i$ is supported on a linkage class of $\NN$). The vectors $c_1,\ldots, c_l$ form a complete set of extreme rays of the cone $\ker W\cap{\mathbb R_{\ge0}}$.
\end{proposition}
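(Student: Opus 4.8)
\medskip
\noindent\textbf{Proof proposal.} The plan is to route the argument through the weighted graph Laplacian of $\NN$ and the Matrix--Tree description of its kernel, letting the deficiency-zero hypothesis enter at exactly one point. Write the mass-action ODE in the form $\dot x = Y A_\kappa \Psi(x)$, where $Y\in\mathbb{R}^{n\times m}$ is the matrix of complexes, $\Psi(x)=(x^{y_1},\ldots,x^{y_m})^{T}$, and $A_\kappa\in\mathbb{R}^{m\times m}$ is the weighted negative Laplacian of the reaction graph: $(A_\kappa)_{ij}=\kappa_{y_j\to y_i}$ when $y_j\to y_i\in E$ and $0$ otherwise for $i\neq j$, with $(A_\kappa)_{jj}=-\sum_{i\neq j}(A_\kappa)_{ij}$. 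Regrouping the monomials in $f_{(\NN,\kappa)}$ shows $W=YA_\kappa$, so $\ker A_\kappa\subseteq\ker W$ automatically. (One also checks here that for a $\WR0$ network every net reaction vector is nonzero: if $w_{y_j}=0$ then $A_\kappa e_j$ lies in $\operatorname{im}A_\kappa\cap\ker Y$, which is trivial by the deficiency computation below, forcing $(A_\kappa)_{jj}=0$, impossible since $y_j$ has an outgoing reaction; hence $\{1,\ldots,m\}$ is exactly the complex set and is genuinely partitioned by the linkage classes.)

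Next I would invoke the classical fact (Feinberg; Horn--Jackson; Matrix--Tree theorem \cite{feinberg1979lectures}) that, because each linkage class $L_1,\ldots,L_l$ is strongly connected, $\ker A_\kappa$ is $l$-dimensional with basis $\xi_1,\ldots,\xi_l$, where $\xi_i$ is supported exactly on $L_i$, is strictly positive there, and has entries the rooted-spanning-tree constants of $L_i$. Put $c_i:=\xi_i$. Then the $c_i$ are non-negative, linearly independent, have pairwise disjoint supports partitioning $\{1,\ldots,m\}$ into the linkage classes, and lie in $\ker W$ since $\ker A_\kappa\subseteq\ker W$.

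The remaining step — and I expect the only nontrivial one — is to show that in fact $\ker W=\ker A_\kappa$, so that $c_1,\ldots,c_l$ is a basis of $\ker W$ and not merely a linearly independent subset; this is where deficiency zero is used. Each column of $A_\kappa$ is a non-negative combination of edge vectors $e_{y'}-e_{y}$, so $\operatorname{im}A_\kappa\subseteq\widetilde{\mathcal S}:=\operatorname{span}\{e_{y'}-e_{y}:y\to y'\in E\}\subseteq\mathbb{R}^m$; since $\dim\operatorname{im}A_\kappa=m-l=\dim\widetilde{\mathcal S}$ (the latter a standard rank computation for the incidence vectors of a graph with $m$ vertices and $l$ components), we get $\operatorname{im}A_\kappa=\widetilde{\mathcal S}$. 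Because $\mathcal S_\NN=Y(\widetilde{\mathcal S})$, the deficiency is $\delta=(m-l)-\dim\mathcal S_\NN=\dim(\widetilde{\mathcal S}\cap\ker Y)$, so $\delta=0$ says $Y$ is injective on $\widetilde{\mathcal S}=\operatorname{im}A_\kappa$. Now if $Wv=YA_\kappa v=0$, then $A_\kappa v\in\operatorname{im}A_\kappa\cap\ker Y=\{0\}$, hence $v\in\ker A_\kappa$. Thus $\ker W=\ker A_\kappa=\operatorname{span}\{c_1,\ldots,c_l\}$, which proves the first assertion.

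Finally, for the extreme-ray claim: $\ker W\cap\mathbb{R}_{\ge0}^m$ is a polyhedral cone and is pointed because it lies inside $\mathbb{R}_{\ge0}^m$. Writing $v\in\ker W=\ker A_\kappa$ as $v=\sum_i\alpha_ic_i$ and restricting to $\operatorname{supp}(c_i)=L_i$ (disjoint from the other supports, with $c_i>0$ there), one sees $v\ge0\iff\alpha_i\ge0$ for all $i$; hence $\ker W\cap\mathbb{R}_{\ge0}^m=\operatorname{Cone}(c_1,\ldots,c_l)$. Since $c_1,\ldots,c_l$ are linearly independent this cone is simplicial, so by the facts on simplicial cones recalled in Section~\ref{sec:prelim} the $c_i$ form a complete set of extreme rays, as claimed.
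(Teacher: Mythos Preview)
The paper does not supply its own proof of Proposition~\ref{prop:2}: it is stated with a citation to \cite[Lemma~3.1]{craciunalgorithm} under the heading ``We end this section by recalling the following propositions from previous works,'' and no argument is given. So there is nothing in the paper to compare against directly.

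Your proof is correct and is essentially the classical Feinberg--Horn--Jackson argument one would expect in the cited source. The factorisation $W=YA_\kappa$, the identification $\operatorname{im}A_\kappa=\widetilde{\mathcal S}$ via the rank count $m-l$ on both sides (using weak reversibility to get $\dim\ker A_\kappa=l$), and the reading of $\delta=0$ as injectivity of $Y$ on $\operatorname{im}A_\kappa$ are all standard and correctly assembled. The parenthetical showing each net reaction vector is nonzero is a nice touch, since it reconciles the paper's convention (columns of $Y$ are source complexes with nonzero net reaction vector) with the Laplacian setup (columns of $Y$ are all complexes); for a $\WR0$ network the two coincide, exactly as you argue. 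The extreme-ray paragraph is clean: disjoint positive supports force $\ker W\cap\mathbb R_{\ge0}^m=\operatorname{Cone}(c_1,\ldots,c_l)$, and linear independence makes this simplicial, so the $c_i$ are the extreme rays by the facts recalled in Section~\ref{sec:prelim}.

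One cosmetic remark: when you assert $\dim\ker A_\kappa=l$ you are implicitly using weak reversibility (for a general directed graph the kernel dimension of the Laplacian is the number of terminal strong components, not of weak components). You do say ``because each linkage class is strongly connected,'' so the hypothesis is present; just be aware that this is where weak reversibility enters, separately from where deficiency zero enters.
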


\begin{proposition}\label{prop:1}\cite{feinberg2019foundations,horn1972general,horn1972necessary} 
Let $\NN$ be a deficiency zero reaction network. Then the complexes of any linkage class of $\cal N$ are affinely independent.
\end{proposition}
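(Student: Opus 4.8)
The plan is to extract affine independence from a dimension count that the deficiency-zero hypothesis forces to be tight. Write the linkage classes of $\NN$ as $\Lambda_1,\dots,\Lambda_l$, let $n_i$ be the number of complexes in $\Lambda_i$ so that $|V|=\sum_{i=1}^l n_i$, and for each $i$ introduce the ``linkage-class stoichiometric subspace'' $\mathcal S_i=\operatorname{span}\{y'-y\,:\,y\to y'\text{ a reaction within }\Lambda_i\}$. Since reactions lying in different linkage classes involve disjoint sets of complexes, every reaction vector of $\NN$ belongs to exactly one $\mathcal S_i$, and hence $\mathcal S_\NN=\mathcal S_1+\dots+\mathcal S_l$.

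First I would bound $\dim\mathcal S_i$ using connectivity of $\Lambda_i$. Fix a reference complex $y^{(i)}\in\Lambda_i$. Every reaction vector $y'-y$ within $\Lambda_i$ equals $(y'-y^{(i)})-(y-y^{(i)})$, and because $\Lambda_i$ is connected one in fact has $\mathcal S_i=\operatorname{span}\{\,y-y^{(i)} : y\in\Lambda_i,\ y\ne y^{(i)}\,\}$; this is a spanning set of $n_i-1$ vectors, so $\dim\mathcal S_i\le n_i-1$. Summing over $i$ and using subadditivity of dimension for sums of subspaces gives $\dim\mathcal S_\NN\le\sum_{i=1}^l\dim\mathcal S_i\le\sum_{i=1}^l(n_i-1)=|V|-l$.

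Now I would invoke the hypothesis. Deficiency zero says precisely $\dim\mathcal S_\NN=|V|-l$, so both inequalities in the previous estimate are equalities; in particular $\dim\mathcal S_i=n_i-1$ for every $i$. But $\mathcal S_i$ is spanned by the $n_i-1$ vectors $y-y^{(i)}$ with $y\in\Lambda_i\setminus\{y^{(i)}\}$, and a spanning set whose cardinality equals the dimension of the space it spans is a basis, hence these vectors are linearly independent — which is exactly the assertion that the complexes of $\Lambda_i$ are affinely independent. As $i$ was arbitrary, the proposition follows.

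I do not expect a serious obstacle; the only point needing a little care is that the decomposition $\mathcal S_\NN=\mathcal S_1+\dots+\mathcal S_l$ need not be direct a priori. It is the deficiency-zero hypothesis that simultaneously forces this sum to be direct and forces each summand to be as large as connectivity permits, and making that ``equality propagates through the chain of inequalities'' step explicit — together with the elementary spanning-set-of-full-size-is-a-basis fact — is all that the argument requires.
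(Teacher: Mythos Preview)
Your argument is correct and is essentially the standard dimension-counting proof of this fact. Note, however, that the paper does not actually supply a proof of this proposition: it merely records the statement and cites \cite{feinberg2019foundations,horn1972general,horn1972necessary} as references, so there is no in-paper proof to compare against. Your write-up would serve perfectly well as a self-contained justification; the only cosmetic point is that the claim $\mathcal S_i=\operatorname{span}\{y-y^{(i)}:y\in\Lambda_i\setminus\{y^{(i)}\}\}$ uses connectivity of $\Lambda_i$ (via a path/spanning-tree argument), which you state but might make one line more explicit.
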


\begin{theorem}\cite[Theorem 3.11]{craciun2021uniqueness}
A mass-action system $\dot x = Wx^Y$ has at most one $\WR0$ realization.
\end{theorem}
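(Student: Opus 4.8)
The plan is to show that \emph{all} of the combinatorial data of a $\WR0$ realization — its complexes, its partition into linkage classes, its reactions, and its rate constants — are forced by the matrices $W$ and $Y$. The two tools I would use are the affine independence of linkage classes in a deficiency zero network (Proposition \ref{prop:1}) and the intrinsic description of $\ker W\cap\R_{\ge 0}^m$ from Proposition \ref{prop:2}.

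First I would pin down the complexes and the linkage classes. Let $(\NN_0,\kappa_0)$ be any $\WR0$ realization of $\dot x=Wx^Y$. In a weakly reversible network every complex lies on a directed cycle, hence has an outgoing reaction. If a complex $z$ of $\NN_0$ had $w_z=\sum_{z\to y'\in\NN_0}\kappa_{z\to y'}(y'-z)=0$, then since the complexes of the linkage class of $z$ are affinely independent (Proposition \ref{prop:1}), the vectors $\{y'-z:z\to y'\in\NN_0\}$ are linearly independent, so all rates $\kappa_{z\to y'}$ vanish — contradicting that $z$ has an outgoing reaction. Hence every complex of $\NN_0$ has a nonzero net reaction vector, so the complex set of $\NN_0$ is exactly $\{y_1,\dots,y_m\}$, the columns of $Y$. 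Next, applying Proposition \ref{prop:2} to $\NN_0$, the pointed cone $\ker W\cap\R_{\ge 0}^m$ has a complete set of extreme rays $c_1,\dots,c_l$ whose supports partition $\{1,\dots,m\}$, with $\supp(c_p)$ contained in a linkage class of $\NN_0$. Since these $l$ disjoint supports cover $\{1,\dots,m\}$, each sits inside a linkage class, and the linkage classes also form a partition of $\{1,\dots,m\}$, a short counting (pigeonhole) argument forces $\supp(c_1),\dots,\supp(c_l)$ to be \emph{exactly} the linkage classes of $\NN_0$. As extreme rays of a pointed cone are unique up to positive scaling, the partition $\{\supp(c_p)\}$ depends only on $W$; therefore any two $\WR0$ realizations of $\dot x=Wx^Y$ have the same complexes and the same partition of those complexes into linkage classes.

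Then I would recover the reactions and rate constants one linkage class at a time. Fix a linkage class with index set $V_p$; the complexes $\{y_i:i\in V_p\}$ are affinely independent, so for each fixed $i\in V_p$ the set $\{y_j-y_i:j\in V_p,\ j\ne i\}$ is linearly independent. Because reactions of $\NN_0$ stay within a linkage class, matching the coefficient of $x^{y_i}$ in $f_{(\NN_0,\kappa_0)}=Wx^Y$ gives $w_i=\sum_{j\in V_p,\,j\ne i}\kappa_{y_i\to y_j}(y_j-y_i)$ with $\kappa_{y_i\to y_j}\ge 0$ (and $\kappa_{y_i\to y_j}=0$ when $y_i\to y_j$ is absent). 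By linear independence these coefficients are uniquely determined by $w_i$ and $Y$, and reading off which are strictly positive recovers exactly the reactions out of $y_i$ with their rate constants. Running over all $i$ and all $V_p$ shows $(\NN_0,\kappa_0)$ is completely determined by $(W,Y)$, which is the assertion.

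I expect the delicate part to be entirely in the second paragraph: checking that a $\WR0$ realization has no ``spurious'' complex with vanishing net reaction vector, and that the support partition of the extreme rays of $\ker W\cap\R_{\ge 0}^m$ genuinely coincides with the linkage-class partition rather than merely refining it. Once the complex set and the linkage-class partition are seen to be invariants of $(W,Y)$, the recovery of the reactions and rates is the routine linear-algebra step made possible by affine independence (Proposition \ref{prop:1}), exactly as carried out inside Algorithm \ref{alg:polly}.
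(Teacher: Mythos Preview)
The paper does not prove this theorem; it merely cites it from \cite{craciun2021uniqueness}. Your argument is correct and is in fact the natural proof implicit in Algorithm~\ref{alg:polly} and Propositions~\ref{prop:2} and~\ref{prop:1}: the complexes of any $\WR0$ realization are forced (no zero net reaction vectors, by affine independence), the linkage-class partition is the support partition of the extreme rays of $\ker W\cap\R_{\ge 0}^m$ (an invariant of $W$), and then affine independence makes the decomposition $w_i=\sum_{j\in V_p}k_{ij}(y_j-y_i)$ unique.

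One small point worth tightening: your ``short counting (pigeonhole) argument'' that the supports $\supp(c_p)$ are \emph{exactly} the linkage classes, rather than a strict refinement of them, implicitly uses that the number $l$ of extreme rays equals the number of linkage classes of $\NN_0$. This is part of the content of Proposition~\ref{prop:2} as stated (the parenthetical ``each $c_i$ is supported on a linkage class'' is meant as a bijection, one $c_i$ per linkage class, and the $c_i$ form a basis of $\ker W$), so you may simply invoke it; but as written your paragraph reads as though the counting alone suffices, which it does not without that extra input.
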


This implies that $\WR0$ realizations are unique, and we may refer to {\em the} $\WR0$ realization of the mass-action system if it exists.

\section{$\WR0$ realizations of mass-action networks}\label{sec:mainres}

Suppose we fix a reaction network $\NN$. A very interesting question is to describe a general algorithm finding the set of rate constants $\kappa$ of a mass-action system $(\NN,\kappa)$ whose ODE system has $\WR0$ realization, like we did in Examples \ref{ex:2} and \ref{ex:3}. This seems to be a substantially more difficult question, and object of future work. On the other hand, it can be easily checked that, in  Example \ref{ex:1}, {\em $\NN_0$ is a $\WR0$ realization of $\NN$}. This means that {\em for any choice of rate constants} for $\NN$, the resulting dynamics is that of a $\WR0$ network $\NN_0$. This makes conclusions on the dynamics robust to changing rate constants. In practice their exact values may not be known, and the system may be highly sensitive to changes in rate constants. We now define what it means to be $\WR0$-realizable.

\begin{definition}
Let $\NN$ be a reaction network. Then $\NN$ is said to be $\WR0$-realizable iff for every choice of rate constants $\kappa$, there exists a weakly reversible network $\NN'$ and a rate constant vector $\kappa'$ such that $(\NN',\kappa')$ is a realization of $(\NN,\kappa)$.
\end{definition}

The following question arises naturally: \\

{\bf Question 2.}
{\em Given a reaction network $\NN$, is $\NN$ {\em $\WR0$-realizable}, i.e. does the mass-action ODE system of $(\NN,\kappa)$ have a $\WR0$ realization for any choice of rate constants $\kappa$}?\\

We provide an answer in the algorithm we introduce next. We need a few definitions/notations for the algorithm and its proof of correctness. 

\begin{definition}
Let $\cal N$ be a $\WR0$ network and $y$ be a source complex of $\NN$. Then

\begin{enumerate}[label=(\roman*)]

\item ${\cal L}_{\cal N}(y)=Cone\{y'-y\ |\ y'\, \text{is a complex of }\NN \text{ in the same linkage class as } y\}$. Note that $Cone_{\NN}(y)\subseteq {\cal L}_{\cal N}(y)$.

\item $\Phi_{{\cal L}_{\NN}}(w_y)$ denotes the face of ${\cal L}_{{\cal N}}(y)$ generated by column $w_y$ of $W$. $\Phi_{{\cal L}_{\NN}}(w_y)$ determines which reactions $y\to y_i$ belong to $\NN$ (lines 13-20 in Algorithm \ref{alg:polly}). In particular, for any $1\le i\le m$ with $y_i\neq y$, we have $y\to y_i\in {\cal N}$ if and only if $y_i-y\in \Phi_{{\cal L}_{\NN}}(w_y)$.

\end{enumerate}

\end{definition}

\subsection{Main results}

Let $\NN$ be a reaction network. If the ODE system of $(\NN,\kappa)$ has $\WR0$ realization, then this is denoted by $(\NN_\kappa, k_\kappa)$. 

\medskip

\begin{breakablealgorithm}
\label{alg:allk}
\caption{$\WR0$ realizations of reaction networks}

\medskip

\noindent \textbf{Input:}  A reaction network $\NN$.

\noindent \textbf{Output:} Either ``True" and return the $\WR0$ realization $\NN_0$ of $\NN$, or ``False". 

\begin{algorithmic}[1] 

\State Run Algorithm~\ref{alg:polly} on the mass-action system $(\NN,{\bf 1})$ (all rate constants are set to 1)

\If{Step 1 outputs ``Fail"}

\State Print: False. There exist rate constants $\kappa$ such that the mass-action ODE system of $(\NN,\kappa)$ does not have $\WR0$ realization. exit.

\Else
\State Return the underlying network $\NN_{\bf 1}$ of the $\WR0$ realization from Step 1. 

\For{every for source complex $y$ of $\NN$} 

\If{$Cone_\NN(y)\not\subseteq Cone_{\NN_{\bf 1}}(y)$ }

\State Print: False. There exist rate constants $\kappa$ such that the mass-action ODE system of $(\NN,\kappa)$ does not have $\WR0$ realization. exit.

\EndIf
\EndFor
\EndIf

\State  $\NN$ has a $\WR0$ realization $\NN_{\bf{1}}$ from Step 5.

\end{algorithmic}

\end{breakablealgorithm}
\medskip

\medskip

Clearly, if $\NN$ has a $\WR0$ realization, then it is $\WR0$-realizable. 


\begin{proof}[Proof of Algorithm 2]

If line 1 outputs ``Fail" then $(\NN,{\bf 1})$ does not have $\WR0$ realization, so $(\NN,\kappa)$ does not have $\WR0$ realization for all $\kappa$.
If line 1 outputs ``Success" then lines 6-11 check that the dynamics of $\NN$ is included in that of $\NN_{\bf{1}}$, in other words, for any rate constants $\kappa$ of $\NN$ there exist rate constants $\cal K$ of $\NN_{\bf{1}}$ such that the two mass-action systems are the same. This is equivalent to checking that 
$Cone_\NN(y)\subseteq Cone_{\NN_\kappa}(y)$
for all source complexes $y$ of $\NN$ by Theorem \ref{thm:dynIncl}. If this is true, then $(\NN,\kappa)$ has $\WR0$ realization for all $\kappa$. If, on the other hand, the algorithm goes through line 8, then there are rate constants $\kappa_0$ such that $(\NN,\kappa_0)$ does not have a $\WR0$ realization with underlying network $\NN_{\bf 1}$. Theorem \ref{thm:main} then implies that $(\NN,\kappa)$ does not have $\WR0$ realization for all $\kappa$.
\end{proof}

We now state our main result in Theorem~\ref{thm:main} which plays a key role in the correctness of Algorithm \ref{alg:allk}.

\begin{theorem}\label{thm:main} 
Let $\NN$ be a $\WR0$-realizable network. Let $(\NN_\kappa, k_\kappa)$ be a $\WR0$ realization of $(\NN,\kappa)$. Then $\NN_\kappa$ does not depend upon $\kappa$.
In other words, if $\NN$ is a $\WR0$-realizable network, then there exists a unique $\WR0$ network $\NN'$ such that $\NN$ is realizable by $\NN'$. 
\end{theorem}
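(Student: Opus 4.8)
The plan is to show that one fixed $\WR0$ network realizes $(\NN,\kappa)$ simultaneously for every $\kappa$, the natural candidate being $\NN_{\bf 1}$, the $\WR0$ realization of $(\NN,{\bf 1})$, which exists because $\NN$ is $\WR0$-realizable. If $(\NN_{\bf 1},k)$ realizes $(\NN,\kappa)$ for some $k$ then, $\NN_{\bf 1}$ being $\WR0$, this is a $\WR0$ realization of $(\NN,\kappa)$ and so, by uniqueness of $\WR0$ realizations (the theorem of \cite{craciun2021uniqueness} recalled above), it coincides with $\NN_\kappa$; hence $\NN_\kappa=\NN_{\bf 1}$ for all $\kappa$, and the ``in other words'' statement follows as well (existence by taking $\NN'=\NN_{\bf 1}$, uniqueness again by uniqueness of $\WR0$ realizations at a single $\kappa$). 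By Theorem~\ref{thm:dynIncl}, ``$\NN_{\bf 1}$ realizes $(\NN,\kappa)$ for all $\kappa$'' is equivalent to: every source complex $y$ of $\NN$ is a complex of $\NN_{\bf 1}$, and the relative interior of $Cone_\NN(y)$ sits inside that of $Cone_{\NN_{\bf 1}}(y)$ for every such $y$ (the relative-interior form ensures the realizing rate constants are positive).

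I would first reduce to two combinatorial invariances. Since the linkage classes of the deficiency-zero network $\NN_\kappa$ are affinely independent (Proposition~\ref{prop:1}), the cone ${\cal L}_{\NN_\kappa}(y)$ is simplicial and $Cone_{\NN_\kappa}(y)=\Phi_{{\cal L}_{\NN_\kappa}}(w_y)$ is a face of it, hence itself simplicial, with extreme rays exactly the vectors $z-y$ over the reactions $y\to z$ of $\NN_\kappa$. Thus $\NN_\kappa$ is recovered from its complex set $S(\kappa)=\{y:\ w_y\neq 0\}$ together with the family $\{Cone_{\NN_\kappa}(y)\}_y$, so it suffices to prove: (a) $S(\kappa)$ is independent of $\kappa$ — equivalently, $Cone_\NN(y)$ is pointed for every source complex $y$, since $0\in\operatorname{relint}Cone_\NN(y)$ exactly when that cone is a subspace; and (b) the partition of $S(\kappa)$ into linkage classes of $\NN_\kappa$ is independent of $\kappa$, which pins down each ${\cal L}_{\NN_\kappa}(y)$ as a single simplicial cone ${\cal L}_y$.

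For (a) and (b) I would argue by continuity over the parameter space. The map $\kappa\mapsto\NN_\kappa$ takes only finitely many values and each level set is semialgebraic, because the tests inside Algorithm~\ref{alg:polly} (extreme-ray supports forming a partition, affine independence, cone membership) are polynomial conditions in $\kappa$; hence along any segment $\kappa(t)$ inside the connected cone ${\mathbb R}_{>0}^{|E|}$ the network $\NN_{\kappa(t)}$ is piecewise constant. At a hypothetical jump at $t^\ast$, $t\mapsto w_y(\kappa(t))$ is continuous, so $w_y(\kappa(t^\ast))$ lies in each of the closed simplicial cones $\overline{Cone_{\NN_{\kappa(t)}}(y)}$ approached from the two sides and in the relative interior of $Cone_{\NN_{\kappa(t^\ast)}}(y)$; using pointedness of all cones involved, the fact that each is a face of the corresponding ${\cal L}$-cone, and the reconstruction of a $\WR0$ network from its cones, one rules out any change of complex set or linkage-class partition across $t^\ast$ — here it is essential that, by $\WR0$-realizability, a $\WR0$ realization exists at \emph{every} point of the segment, so the realization never disappears. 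Granting (a) and (b), the finish is clean: since $w_y\in{\cal L}_y$ we expand $w_y=\sum_{z}a_z(\kappa)(z-y)$ in the extreme rays of the fixed simplicial cone ${\cal L}_y$, each $a_z$ linear in $\kappa$ and $a_z(\kappa)\ge 0$ for all $\kappa\in{\mathbb R}_{>0}^{|E|}$; a homogeneous linear functional that is nonnegative on the whole open orthant is either identically zero or strictly positive there, so $\{z:\ a_z(\kappa)>0\}$ — and therefore the face $Cone_{\NN_\kappa}(y)=\Phi_{{\cal L}_y}(w_y)=Cone\{z-y:\ a_z(\kappa)>0\}$ — does not depend on $\kappa$, which together with the invariance of $S(\kappa)$ gives $\NN_\kappa=\NN_{\bf 1}$.

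The main obstacle is step (b) (and, relatedly, the pointedness in (a)): this is where the hypothesis that a $\WR0$ realization exists for \emph{all} $\kappa$, rather than for a single $\kappa$, does the real work, since for one fixed $\kappa$ the membership $w_y\in\operatorname{relint}Cone_{\NN_\kappa}(y)$ is far weaker than the inclusion $Cone_\NN(y)\subseteq Cone_{\NN_\kappa}(y)$ that rigidifies the realization. Concretely, I expect the delicate point to be ruling out that the simplicial cone $Cone_{\NN_\kappa}(y)$ ``rotates'', i.e.\ changes which extreme rays it has, as $\kappa$ crosses a wall in ${\mathbb R}_{>0}^{|E|}$; once this rigidity is established the rest is bookkeeping.
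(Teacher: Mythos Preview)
Your overall reduction matches the paper's: the paper also isolates (a) as Lemma~\ref{lem:monPersists} and (b) as Lemma~\ref{lem:samelc}, and then concludes by showing the face $\Phi_{{\cal L}_y}(w_y)$ is independent of $\kappa$. Your finishing step via the linearity of the coefficients $a_z(\kappa)$ is a clean variant of the paper's one-line observation $\Phi_{{\cal L}(y)}(w_y)=\Phi_{{\cal L}(y)}(Cone_\NN(y))$; both rest on the same underlying fact.

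The gap is exactly where you flag it. Your continuity argument for (a) and (b) is not a proof: asserting that at a jump $t^\ast$ one ``rules out any change of complex set or linkage-class partition'' using pointedness and face structure does not actually exclude, for instance, that two linkage classes of $\NN_{\kappa(t)}$ merge into one at $t^\ast$ (so that some ${\cal L}_y$ strictly enlarges), or that a complex $y_0$ drops out at $t^\ast$ because $w_{y_0}(\kappa(t^\ast))=0$. Knowing that $w_y(\kappa(t^\ast))$ lies in the closure of the one-sided cones and in the relative interior of the cone at $t^\ast$ does not by itself force the ambient ${\cal L}$-cones---and hence the linkage classes---to agree across $t^\ast$. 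The paper avoids continuity altogether and supplies combinatorial constructions. For (a) it engineers a specific $\kappa^0$ with $w_{y_0}=0$, compares the extreme rays of $\ker W_{\kappa^0}\cap\R^{m-1}_{\ge 0}$ with those of $\ker W_{\kappa^1}\cap\R^m_{\ge 0}$, and deduces that no reaction of $\NN_{\kappa^1}$ can point into $y_0$, contradicting weak reversibility. For (b) it fixes $\kappa^0$ achieving the maximum number of linkage classes and walks to an arbitrary $\kappa$ by replacing the rate constants of one linkage class at a time, checking after each replacement that the extreme-ray supports of $\ker W\cap\R^m_{\ge 0}$ are unchanged; maximality of $l_0$ forces equality in a counting inequality at every step. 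These discrete rigidity arguments are precisely the substance your sketch defers.
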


In what follows, we will prove certain lemmas that will be used in the proof of Theorem~\ref{thm:main}.
Throughout this exposition, $\NN$ will denote a $\WR0$-realizable network. For any rate constant vector $\kappa$ for $\NN$ we write $f_{(\NN,\kappa)}(x)=W_\kappa x^{Y_\kappa}$
where $W_\kappa$ has no zero columns. Recall that the $\WR0$ realization of $(\NN,\kappa)$ is denoted by $(\NN_{\kappa},k_{\kappa})$.

The next lemma shows that if $\NN$ is $\WR0$-realizable then for any source complex $y$ of $\NN$ and any rate constants $\kappa$, the monomial $x^{y}$ appears in $f_{(\NN,\kappa)}(x)$, i.e. the net reaction vector of any source  complex $y$ of $\NN$ is nonzero for all rate constants $\kappa$ of $\NN$. 

\begin{lemma}\label{lem:monPersists}
 If $\NN$ is $\WR0$-realizable then for any source complex $y$ of $\NN$ we have $0\notin \mathrm{int}(Cone_\NN(y))$.
\end{lemma}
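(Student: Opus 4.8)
The plan is to argue by contradiction: suppose $y$ is a source complex of $\NN$ but there are rate constants $\kappa$ for which $0\in\mathrm{int}(Cone_\NN(y))$, i.e.\ the net reaction vector $w_y$ vanishes for this $\kappa$. I want to perturb $\kappa$ slightly to a nearby $\kappa'$ so that $w_y$ becomes a small nonzero vector lying in a ``bad'' direction — one that cannot be realized by any $\WR0$ network — while keeping all the other net reaction vectors essentially unchanged. This would contradict the assumed $\WR0$-realizability of $\NN$.

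The key observation is the characterization~\eqref{eq:varyk}: as $\kappa$ ranges over all positive rate-constant vectors, $w_y$ fills out exactly the relative interior of $Cone_\NN(y)$. If $0\in\mathrm{int}(Cone_\NN(y))$, then $Cone_\NN(y)$ contains a line, say it contains both $v$ and $-v$ for some $v\ne 0$; in fact $\mathrm{int}(Cone_\NN(y))$ is an open neighborhood of $0$ inside the linear span $L$ of the vectors $\{y'-y : y\to y'\in\NN\}$. So by choosing $\kappa'$ near $\kappa$ I can make $w_{y}(\kappa')$ equal to any sufficiently small vector of $L$, in particular a vector $u$ that I will choose to obstruct $\WR0$-realizability. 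Now bring in Proposition~\ref{prop:2}: if $(\NN,\kappa')$ had a $\WR0$ realization with ODE matrix $W_{\kappa'}$, then $\ker W_{\kappa'}$ would have a non-negative basis whose supports partition the column index set, and each $c_i$ is supported on a linkage class. In particular $W_{\kappa'}$ would have no zero columns (every source complex of the realization has nonzero net reaction vector), but more importantly, for the realization to exist Theorem~\ref{thm:dynIncl} forces $y$ to remain a source complex of $\NN_{\kappa'}$ with $Cone_{\NN}(y)\subseteq Cone_{\NN_{\kappa'}}(y)$, and since $\NN_{\kappa'}$ is $\WR0$, $Cone_{\NN_{\kappa'}}(y)$ must be a pointed cone (it is generated by vectors $y_j-y$ within a single deficiency-zero linkage class, hence — by Proposition~\ref{prop:1} — by affinely independent complexes, which forces pointedness of the differences). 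A pointed cone cannot contain a line, contradicting $0\in\mathrm{int}(Cone_\NN(y))\subseteq Cone_\NN(y)\subseteq Cone_{\NN_{\kappa'}}(y)$.

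So in fact the cleanest route avoids the perturbation entirely: the assumption $0\in\mathrm{int}(Cone_\NN(y))$ means $Cone_\NN(y)$ contains a line through the origin; by $\WR0$-realizability at this very $\kappa$, Theorem~\ref{thm:dynIncl} (applied in the form given in Definition~\ref{def:real}/the discussion after it) forces $Cone_\NN(y)\subseteq Cone_{\NN_\kappa}(y)$ whenever $y$ is a source complex shared by $\NN_\kappa$ — but one must be slightly careful, since if $w_y=0$ then $y$ need not be a source complex of the realization, and $Cone$ inclusions are stated for source complexes with nonzero net reaction vector. This is the one subtlety, and here the perturbation argument does the real work: pick $\kappa'$ near $\kappa$ with $w_y(\kappa')$ a small \emph{nonzero} vector of $L$; now $y$ \emph{is} a source complex of $(\NN,\kappa')$, so by $\WR0$-realizability $y$ is a source complex of $\NN_{\kappa'}$ and $Cone_\NN(y)\subseteq Cone_{\NN_{\kappa'}}(y)$; but the right side is pointed (deficiency-zero linkage class $\Rightarrow$ affinely independent complexes $\Rightarrow$ the cone of differences from $y$ is simplicial, hence pointed), while the left side contains a line. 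Contradiction.

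I expect the main obstacle to be exactly this bookkeeping about whether $y$ is a source complex of the realization: one has to ensure that after perturbation $y$ still contributes a nonzero monomial, that the linkage-class structure of $\NN_{\kappa'}$ gives pointedness (via Propositions~\ref{prop:1} and~\ref{prop:2}), and that $Cone_\NN(y)$ genuinely contains a line — the last being immediate from $0\in\mathrm{int}(Cone_\NN(y))$ together with $Cone_\NN(y)$ not being a single point (it contains the nonzero reaction vectors of $y$'s own outgoing reactions in $\NN$, since $y$ is a source complex of $\NN$). The rest is routine.
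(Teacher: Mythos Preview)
There is a genuine gap. You invoke Theorem~\ref{thm:dynIncl} to conclude $Cone_\NN(y)\subseteq Cone_{\NN_{\kappa'}}(y)$, but that theorem applies only when $\NN$ is \emph{dynamically realizable} by $\NN_{\kappa'}$ in the sense of Definition~\ref{def:real}, i.e.\ when this single network $\NN_{\kappa'}$ realizes $(\NN,\kappa)$ for \emph{every} $\kappa$. The hypothesis ``$\WR0$-realizable'' says only that for each $\kappa$ there is \emph{some} $\WR0$ realization $\NN_\kappa$, and a priori the network $\NN_\kappa$ may change with $\kappa$. From your single perturbed $\kappa'$ you learn only that the one vector $w_y(\kappa')$ lies in $Cone_{\NN_{\kappa'}}(y)$; you do not learn that all of $Cone_\NN(y)$ sits inside that cone. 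So the contradiction ``pointed cone contains a line'' does not follow. Indeed, establishing that the $\NN_\kappa$ all coincide is precisely the content of Theorem~\ref{thm:main}, and the present lemma is a step toward that theorem, so importing the cone inclusion here would be circular.

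The paper's argument sidesteps this by comparing two carefully chosen rate vectors $\kappa^0,\kappa^1$ that differ only on reactions with source $y_0$: with $\kappa^0$ the net reaction vector at $y_0$ vanishes, with $\kappa^1$ it does not, and all other columns of $W$ are literally the same. One then relates the extreme-ray structure of $\ker W_{\kappa^0}\cap\R^{m-1}_{\ge 0}$ and $\ker W_{\kappa^1}\cap\R^m_{\ge 0}$ via Proposition~\ref{prop:2}, deduces that the linkage class of $y_0$ in $\NN_{\kappa^1}$ is obtained from a linkage class of $\NN_{\kappa^0}$ by adjoining $y_0$, and finally uses the face description $\Phi_{\mathcal L}(w_{y_i})$ to see that no reaction $y_i\to y_0$ can occur in $\NN_{\kappa^1}$ --- contradicting weak reversibility. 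The point is that this argument only ever compares two specific realizations and never needs the full cone inclusion you are assuming.
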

\begin{proof}
For contradiction, assume that there exists a source complex $y_0$ of $\NN$ such that $0\in \mathrm{int}(Cone_\NN(y))$, i.e.,
$$\sum_{y_0\to y'\in\NN}\alpha_{y_0\to y'}(y'-y_0)=0$$ 
for some positive $\alpha_{y_0\to y'}$.
Let us fix a set of rate constants $\kappa^1$ for $\NN$ such that the net reaction vectors corresponding to all source complexes of $\NN$ are nonzero (i.e., there is no cancellation of monomials in $f_{(\NN,\kappa)})$. Let $\kappa^0$ be a new set of rate constants for $\NN$ obtained by replacing $\kappa^1_{y_0\to y'}$ by $\alpha_{y_0\to y'}$ for all rate constants of reactions starting at $y_0$. This implies that only the vertex $y_0$ has zero net reaction vector in $(\NN,\kappa)$.

Let $m$ denote the number of source complexes of $\NN$. Then $W_{\kappa^1}$ has $m$ columns and we may shuffle complexes so that $y_0$ is the last column of $Y_{\kappa^1}$, i.e. 
$W_{\kappa^1}=[W_{\kappa^0}|w_{y_m}]$.  
Let $\{c_1,\ldots,c_l\}\subset\R^m$ be a complete set of extreme rays of $\ker W_{\kappa^1}\cap\R^m_{\ge0}$. Then the first $m-1$ coordinates of these vectors 
are a complete set of extreme rays of $\ker W_{\kappa^0}\cap\R^{m-1}_{\ge0}$. Indeed, let $d_i\in {\mathbb R}^{m-1}$ denote the projection of $c_i$ in its first $m-1$ coordinates. For any $u\in\ker W_{\kappa^0}\cap\R^{m-1}_{\ge0}$ let $v=[u, 0]\in {\mathbb R}^m$. Then $v\in\ker W_{\kappa^1}\cap\R^{m}_{\ge0}$, so $v=\sum_{i=1}^l \alpha_i c_i$ for some nonnegative $\alpha_1,\ldots, \alpha_l$. Discarding the last coordinate gives 
$u=\sum_{i=1}^l \alpha_i d_i$. 
Finally, the supports of $c_1,\ldots c_l$ partition  $\{1,\ldots, m\}$ (Proposition \ref{prop:2}), so the supports of $d_1,\ldots d_l$ partition $\{1,\ldots, m-1\}$. Therefore $\{d_1,\ldots, d_l\}$ is a complete set of extreme rays of $\ker W_{\kappa^1}\cap\R^{m-1}_{\ge0}$.

Suppose $y_0, y_1,\ldots, y_t$ is the list of complexes in the linkage class of $\NN_{\kappa^1}$ containing $y_0$. Since  $c_i$ and $d_i$ have the same support (they are equal on the first $m-1$ coordinates), Proposition \ref{prop:2} implies that 
$\{y_1,\ldots, y_t\}$ is the set of complexes in some linkage class of $\NN_{\kappa^0}$. For any $i\in\{1,\ldots, t\}$ column $w_i=w_{y_i}$ of $W_{\kappa^0}$  belongs to ${\cal L}_{\NN_{\kappa^0}}(y_i)=Cone\{y_j-y_i|1\le j\le t, j\neq i\}$, which is a face of ${\cal L}_{\NN_{\kappa^1}}(y_i)=Cone\{y_j-y_i|0\le j\le t, j\neq i\}$ not containing $y_i-y_0$.
Therefore $y_i-y_0\notin \Phi_{  \cal{L}_{{\NN_{\kappa}^{\text{1}}}}}(w_i)$. This implies that $y_i\to y_0\notin {\NN_0}$ and hence none of the reactions $y_1\to y_0,\ldots, y_t\to y_0$ belong to $\NN_0$. This is a contradiction since $\NN_0$ is weakly reversible.

\end{proof}

\medskip

Next we show that for a $\WR0$-realizable network $\NN$, changing rate constants  constants does not change the number of linkage classes and the complexes in each linkage class in the $\WR0$ realization of $(\NN,\kappa)$. We formalize this using the idea of linkage class partition.

\begin{definition} The partition on the set of complexes of reaction network $\NN$ based on their linkage class membership is called
the {\em linkage class partition} of $\NN$.
\end{definition}

\begin{lemma}\label{lem:samelc}
Let $\NN$ be a reaction network. Let $(\NN_\kappa, k_\kappa)$ denote the set of $\WR0$ realizations of $(\NN,\kappa)$. Then $\NN_{\kappa}$ has the same linkage class partition for every choice of rate constants $\kappa$ of $\NN$.
\end{lemma}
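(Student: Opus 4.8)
The plan is to fix the network $\NN$ and two rate-constant vectors $\kappa,\kappa'$, and to show that the $\WR0$ realizations $\NN_\kappa$ and $\NN_{\kappa'}$ have the same linkage class partition. By Lemma~\ref{lem:monPersists} every source complex of $\NN$ has a nonzero net reaction vector for all rate constants, and since every complex of a weakly reversible network is a source complex, $\NN_\kappa$ and $\NN_{\kappa'}$ have the same vertex set $\{y_1,\dots,y_m\}$ (namely the source complexes of $\NN$). By Proposition~\ref{prop:2} the linkage class partition of $\NN_\kappa$ is exactly the partition of $\{1,\dots,m\}$ by the supports of the extreme rays of $\ker W_\kappa\cap\R^m_{\ge 0}$, and likewise for $\kappa'$; since two partitions that mutually refine one another coincide, it suffices to prove that the partition of $\NN_{\kappa'}$ refines that of $\NN_\kappa$.

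Two inputs are needed. The first is a structural property of any $\WR0$ network $\mathcal M$: its linkage classes are \emph{stoichiometrically independent}, i.e.\ if $\mathcal S_1,\dots,\mathcal S_q$ are the spans of the reaction vectors of its individual linkage classes, then $\dim(\mathcal S_1+\cdots+\mathcal S_q)=\sum_i\dim\mathcal S_i$. This is a consequence of deficiency zero: from the identity $\delta(\mathcal M)-\sum_i\delta(L_i)=\sum_i\dim\mathcal S_i-\dim\sum_i\mathcal S_i$ together with $\delta(\mathcal M)=0$ and $\delta(L_i)\ge 0$, one gets both $\delta(L_i)=0$ for each $i$ and the claimed dimension equality; by weak reversibility $\mathcal S_i=\mathrm{span}\{y-y'\mid y,y'\text{ complexes of }L_i\}$. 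The second input — and this is where $\WR0$-\emph{realizability} of $\NN$ (for all $\kappa$, not just one) is used — is that for each source complex $y$ of $\NN$, the cone $Cone_\NN(y)$ lies in a fixed subspace $\mathcal S^{\ast}_y$, namely the stoichiometric span of the linkage class of $y$ in a \emph{generic} realization. One argues: the linkage class partition of $\NN_\kappa$ is the same, call it $\Pi^{\max}$, for all $\kappa$ in a dense open set of rate constants on which $\mathrm{rank}\,W_\kappa$ is maximal and the (always simplicial) cone $\ker W_\kappa\cap\R^m_{\ge 0}$ has stable combinatorics; for such $\kappa$, $w^{(\kappa)}_y$ lies in the fixed span $\mathcal S^{\ast}_y$ attached to the block of $\Pi^{\max}$ containing $y$. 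On the other hand, by (\ref{eq:varyk}) the vectors $w^{(\kappa)}_y$ already fill out $\mathrm{int}(Cone_\NN(y))$ when only the rate constants of reactions out of $y$ are varied, and this can be done within the dense open set above; hence $\mathrm{int}(Cone_\NN(y))\subseteq\mathcal S^{\ast}_y$ and so $Cone_\NN(y)\subseteq\mathcal S^{\ast}_y$.

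Given these, the refinement claim is immediate: let $A$ be a linkage class of $\NN_{\kappa'}$; by Proposition~\ref{prop:2} there is $c\in\R^{A}_{>0}$ with $\sum_{l\in A}c_l\,w^{(\kappa')}_{y_l}=0$. Splitting this relation according to which block of $\Pi^{\max}$ contains $l$, the second input puts each partial sum in the corresponding $\mathcal S^{\ast}$, and these are in direct sum by the first input (applied to a generic realization), so each partial sum vanishes. Thus, for every block $V^{\ast}$ of $\Pi^{\max}$, the set $A\cap V^{\ast}$ carries a strictly positive kernel relation of $W_{\kappa'}$ and so is a union of linkage classes of $\NN_{\kappa'}$; being contained in the single linkage class $A$, it is empty or all of $A$. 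Hence $A\subseteq V^{\ast}$ for exactly one block, i.e.\ the partition of $\NN_{\kappa'}$ refines $\Pi^{\max}$ — and the same for $\NN_{\kappa}$, so both refine $\Pi^{\max}$. Finally, to rule out a \emph{proper} refinement, suppose some block $V^{\ast}$ of $\Pi^{\max}$ were split by $\NN_{\kappa'}$, say into linkage classes including $A\subsetneq V^{\ast}$. For $l\in A$, $w^{(\kappa')}_{y_l}$ is a relative interior point of $Cone_\NN(y_l)$ that lies in $Cone\{y_{l'}-y_l\mid l'\in A\}$, a face of the simplicial cone $\mathcal L_{\NN_{\mathrm{gen}}}(y_l)=Cone\{y_{l'}-y_l\mid l'\in V^{\ast}\}$ (simplicial by affine independence, Proposition~\ref{prop:1}); hence the whole cone $Cone_\NN(y_l)$, and in particular $w^{(\kappa)}_{y_l}$ for generic $\kappa$, lies in that face, so $\NN_{\mathrm{gen}}$ has no reaction from $A$ to $V^{\ast}\setminus A$ — contradicting strong connectedness of the linkage class $V^{\ast}$ in $\NN_{\mathrm{gen}}$. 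Therefore every $\NN_\kappa$ has linkage class partition $\Pi^{\max}$.

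I expect the delicate point to be the second input — the confinement of $Cone_\NN(y)$ to a fixed subspace. It requires pinning down a precise sense of ``generic'' in which the linkage class partition of the realization is constant (e.g.\ on a dense open subset of the maximal-rank locus where the simplicial cone $\ker W_\kappa\cap\R^m_{\ge 0}$ does not change combinatorial type), and then the density argument that the generic net reaction vectors $w^{(\kappa)}_y$ sweep out $\mathrm{int}(Cone_\NN(y))$. Both are exactly the places where one needs the realization to exist for \emph{all} rate constants, not merely a single one.
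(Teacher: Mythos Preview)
Your approach is essentially correct and genuinely different from the paper's. The paper argues \emph{iteratively}: it fixes $\kappa^0$ for which $\NN_{\kappa^0}$ has the \emph{maximum} number of linkage classes, then for an arbitrary $\kappa$ replaces the rate constants one linkage class at a time (producing a finite chain $\kappa^0,\kappa^1,\ldots$ ending at $\kappa$), and shows at each step via an extreme-ray count (inequality~(\ref{eq:ineq})) that the linkage class partition cannot change. Your approach is \emph{structural}: you fix a generic partition $\Pi^{\max}$, invoke stoichiometric independence of linkage classes in $\WR0$ networks (a clean consequence of deficiency additivity that the paper does not use) to force every partition to refine $\Pi^{\max}$, and then use the simplicial face geometry of $\mathcal L_{\NN_{\mathrm{gen}}}(y)$ to rule out proper refinement. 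Your route yields more geometric insight into why the partition is rigid; the paper's is more elementary and avoids any appeal to genericity.

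One point deserves tightening. Your second input asserts a \emph{dense} open set of $\kappa$'s sharing partition $\Pi^{\max}$, and then uses density to get $\mathrm{int}(Cone_\NN(y))\subseteq\mathcal S^{\ast}_y$ pointwise. Density is not obvious here, and the pointwise claim ``this can be done within the dense open set'' is shaky when the fiber $\{ \kappa : w^{(\kappa)}_y=v\}$ is lower-dimensional. What comes for free (finitely many partitions, semi-algebraic loci, Baire) is a \emph{nonempty open} $U$ with constant partition, and that suffices: since $\kappa\mapsto w^{(\kappa)}_y$ is open onto $\mathrm{int}(Cone_\NN(y))$, the image $\phi(U)$ is a nonempty relatively open subset of $\mathrm{int}(Cone_\NN(y))$ lying in the subspace $\mathcal S^{\ast}_y$, which already forces $\mathrm{span}\,Cone_\NN(y)\subseteq\mathcal S^{\ast}_y$ --- all that the direct-sum refinement argument needs. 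For the no-proper-refinement step you also need the \emph{cone} containment $Cone_\NN(y_l)\subseteq\mathcal L_{\NN_{\mathrm{gen}}}(y_l)$ (not just spans) to run the face argument; this follows once refinement is established for all $\kappa$, since then $w^{(\kappa)}_{y_l}\in\mathcal L_{\NN_\kappa}(y_l)\subseteq\mathcal L_{\NN_{\mathrm{gen}}}(y_l)$ for every $\kappa$, hence $\mathrm{int}(Cone_\NN(y_l))\subseteq\mathcal L_{\NN_{\mathrm{gen}}}(y_l)$. With these two clarifications your argument is complete.
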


\begin{proof}
Let $\kappa$ be an arbitrary vector of rate constants of $\NN$. Let $l$ be the number of linkage classes in $\NN_{\kappa}$. Fix $\kappa^0$ such that $\NN_{\kappa^0}$ is the network that has the maximum number $l_0$ of linkage classes among all $\NN_{\kappa}$.

Let $L$ be a linkage class of $\NN_{\kappa^0}$ and let $\kappa^1$ be obtained from $\kappa^0$ by replacing 
$\kappa^0_{y\to y'}$ by $\kappa_{y\to y'}$ for all reactions $y\to y'$ with source complex in $L$. Let $\{c_1,\ldots,c_{l_0}\}$ denote a complete set of extreme rays of $\ker W_{\kappa^0}\cap\R^m_{\ge0}$ and assume without loss of generality that complexes of $L$ correspond to the support of $c_{l_0}$ (using Proposition \ref{prop:2}).

Lemma~\ref{lem:monPersists} ensures that for all $\kappa$, $W_{\kappa}$ has $m$ columns, corresponding to the $m$ source monomials of $\NN$. For any source complex $y$ of $\NN$ not belonging to $L$, the corresponding columns of $W_{\kappa^0}$ and $W_{\kappa^1}$ are equal. The remaining columns correspond to complexes in $L$, which in turn correspond to zero entries in $c_j$, for $j<l_0$ since those $c_j$ are supported on different linkage classes. For $1\le j<l_0$, we have  $W_{\kappa^1}c_j=W_{\kappa^0}c_j=0$, and 
hence $c_j\in\ker W_{\kappa^1}\cap\R^m_{\ge0}$. 

Suppose $\NN_{\kappa^1}$ has $l_1$ linkage classes and let $d_1,\ldots, d_{l_1}$ be a complete set of extreme rays of $W_{\kappa^1}\cap\R^m_{\ge0}$. This is also a basis of $\ker W_{\kappa^1}$ and we can write $c_j$ as a positive linear combination of vectors $\{d_i|i\in I_j\}$, where the index sets $I_j$ are determined uniquely by $j$. For $i\in I_j$ the support of $d_i$ is included in that of $c_j$, and since $c_1,\ldots, c_{l_0}$ have pairwise disjoint supports, it follows that the index sets $I_1,\ldots, I_{l_0-1}$  are pairwise disjoint. Therefore 

\begin{equation}\label{eq:ineq}
l_1\ge \sum_{i=1}^{l_0-1}|I_j|+1\ge l_0
\end{equation}
where the term 1 accounts for at least a linkage class of $\NN_{\kappa^1}$ containing the complexes in $L$. But $l_0$ is the maximum possible number of linkage classes, so $l_1=l_0$. 
Therefore we have equality in (\ref{eq:ineq}) and  $|I_j|=1$ for $1\le j\le l_0-1$, in other words $c_j$ is an extreme ray of $\ker W_{\kappa^1}\cap\R^m_{\ge0}$. Moreover, there is a linkage class of $\NN_{\kappa^1}$ with the same complexes as $L$ and so $c_{l_0}$ is an extreme ray of $\NN_{\kappa^1}$ as well. This shows that $\NN_{\kappa^0}$ and $\NN_{\kappa^1}$ have the same linkage class partition.

Since $\NN_{\kappa^1}$ has maximum number of linkage classes, we may repeat the procedure above: pick a linkage class of $\NN_{\kappa^1}$ that contains at least one source complex whose corresponding rate constants have not been updated from $\kappa^0$, and replace $\kappa^0_{y\to y'}$ with $\kappa_{y\to y'}$ for all source complexes $y$ in that class. At each step the rate constants corresponding to at least one source complex get updated from $\kappa^0$ to $\kappa$, and the procedure terminates, $\NN_s=\NN$. We conclude that $\NN_{\kappa^0}$ and $\NN_{\kappa}$ have the same linkage class partition. 
\end{proof}

We now have all the ingredients for the proof of Theorem~\ref{thm:main}.

\begin{proof}[Proof of Theorem \ref{thm:main}]
Let $\NN$ be $\WR0$-realizable. By Lemma \ref{lem:monPersists} $\NN_\kappa$ has the same complexes (source complexes of $\NN$). Let $y$ be a source complex of $\NN$ and $\{y, y_1,\ldots, y_t\}$ denote its class in the linkage class partition in any $\WR0$ realization. By Lemma~\ref{lem:samelc}, $\NN_\kappa$ has the same linkage class partition for all $\kappa >0$. Then ${\cal L}_{\NN_{\kappa}}(y)=Cone(y_i-y|1\le i\le t)$ does not depend on $\kappa$. Therefore, we will denote ${\cal L}(y) ={\cal L}_{\NN_{\kappa}}(y)$ to indicate that ${\cal L}_{\NN_{\kappa}}(y)$ does not depend on $\kappa$.\\

\noindent Note that since $\kappa>0$, $w_y$ lies in the relative interior of $Cone_{\NN}(y)$. Therefore, we get 
\begin{eqnarray}
\Phi_{{\cal L}(y)}(w_y)=\Phi_{{\cal L}(y)}(Cone_{\NN}(y))
\end{eqnarray}
This implies that for any choice of $\kappa$, the face of ${\cal L}(y)$ generated by column $w_y$ corresponding to $y$ in $W_\kappa$ does not depend on $\kappa$. As a consequence
, all networks $\NN_{\kappa}$ have the same set of reactions with source complex $y$. But since $y$ was arbitrary, $\NN$ is realizable by the same $\WR0$ network.

\end{proof}


\section{Software implementation}\label{sec:soft}
Algorithm \ref{alg:allk} has been implemented in CoNtRol, an open source framework for the analysis of chemical reaction networks, commonly used by researchers in the field. CoNtRol has multiple tests, each analyzing the structure of a given reaction network to check if it has a given dynamical property. CoNtRol is hosted at West Virginia University, \href{https://control.math.wvu.edu}{https://control.math.wvu.edu}.

Algorithm \ref{alg:allk} relies on Algorithm \ref{alg:polly} which is implemented numerically. In Algorithm \ref{alg:allk}, We have set a threshold such that any value $|x|\le10^{-13}$ is considered to be zero. Algorithm \ref{alg:polly} uses GNU GLPK which ships with octave version 6.4.0 \cite{octave} to find a minimal set of extreme rays of $\ker(W)\cap\R^m_{\ge0}$. Based on the precision of the GNU GLPK Algorithm, we have set Algorithm \ref{alg:polly} to have a threshold such that any value $|x|\le 10^{-7}$ is considered to be zero.

\section{Discussion and Future Work}\label{sec:future}

Weakly reversible deficiency zero networks (abbreviated as $\WR0$ networks) are known to display robust dynamics. These networks are complex balanced for every choice of rate constants. For $\WR0$ networks, there exists a Lyapunov function (called the Horn-Jackson Lyapunov function), which implies that their equilibria are locally asymptotically stable~\cite{horn1972general}.Further, they are conjectured to be even globally stable~\cite{craciun2015toric}.  

It is instructive to analyze $\WR0$ networks in the context of a property called \emph{dynamical equivalence}.
Dynamical equivalence is a  property by which two different reaction networks can generate the same dynamical system. This property has been exploited to analyze the dynamics of networks for which classical theorems in reaction network theory stand silent. A first step in this direction was taken in~\cite{craciunalgorithm}. In particular, this is the content of Question 1 in our paper: Does a given mass-action system have $\WR0$ realization? An algorithm (labeled as Algorithm~\ref{alg:polly}) was proposed that takes an input a dynamical system and outputs whether or not there exists a $\WR0$ realization.

Taking this forward, we answer a related question (phrased as Question 2 in this paper): Given a reaction network $\NN$, of $(\NN,\kappa)$, does the mass-action ODE system of $(\NN,\kappa)$ have a $\WR0$ realization for any choice of rate constants $\kappa$? We show in Theorem~\ref{thm:main} that if such a $\WR0$ realization exists, then it is unique. Further, this is used to provide an algorithm (labelled as Algorithm~\ref{alg:allk}) for finding this realization when it exists.

There is a a larger question naturally related to  Questions 1 and 2 above, which we are currently exploring.

\medskip
{\bf Question 3.} Given a reaction network $\NN$, what is the set of rate constants $\kappa$ such that the mass-action system $(\NN,\kappa)$ has a $\WR0$ realization? 

\medskip

This seems to be a hard question, which involves symbolic convex cone computation and may require tools from real algebraic geometry. We intend to start with simple classes of such networks, like those in examples~\ref{ex:1} and \ref{ex:2} where these parameter regions were determined by inspection. The interesting connection between simplicial geometry and $\WR0$ networks which we started to explore will be useful tool in answering this question, particularly when the rate constants are constrained to polyhedral cones. 

\bibliographystyle{unsrt}
\bibliography{Bibliography}

\end{document}